\documentclass[runningheads]{llncs}
\usepackage[utf8]{inputenc}

\usepackage{graphicx}
\usepackage{amsmath}
\usepackage{amssymb}
\usepackage{bbm}
\usepackage{physics}
\usepackage{MnSymbol}

\usepackage[maxbibnames=20,style=numeric]{biblatex}
\addbibresource{references.bib}

\usepackage{authblk}

\usepackage{tikz}
\usetikzlibrary{positioning}

\usepackage{algorithm}
\usepackage{algpseudocode}

\newcommand{\ZZ}{\mathbb{Z}}

\newcommand{\NN}{\mathbb{N}}

\newcommand{\pqc}{{\tt PQC}}
\newcommand{\nist}{{\tt NIST}}
\newcommand{\sdpke}{{\tt SDPKE}}
\newcommand{\kem}{{\tt KEM}}
\newcommand{\dss}{{\tt DSS}}

\newcommand{\gemss}{{\tt G{\it e}MSS}}
\newcommand{\rain}{{\tt Rainbow}}
\newcommand{\sike}{{\tt SIKE}}
\newcommand{\sdlp}{{\tt SDLP}}

\newcommand{\spdh}{{\tt SPDH}} 
\newcommand{\gadlp}{{\tt GADLP}}

\newcommand{\ahsp}{{\tt AHSP}}

\newcommand\repeatedtheorem[2]{%
  \expandafter\let\expandafter\repeatedtheoremtmp\csname#1name\endcsname
  \expandafter\def\csname#1name\endcsname{%
    \repeatedtheoremtmp\ \ref{#2}%
    \global\expandafter\let\csname#1name\endcsname\repeatedtheoremtmp
  }
}

\spnewtheorem*{theorem*}{Theorem}{\normalshape\bfseries}{\itshape}



\newcommand{\modd}[1]{[#1]_r}

\newcommand{\ZZr}{\mathbb{Z}_r}
\newcommand{\bigO}[1]{\mathcal{O}(#1)}
\newcommand{\bigOmega}[1]{\Omega(#1)}

\makeatletter

\newcommand*\star@[2]{\mathbin{\vcenter{\hbox{\scalebox{#2}{$\m@th#1\bullet$}}}}}
\makeatother

\setlength\parindent{0pt}

\title{A Subexponential Quantum Algorithm for the Semidirect Discrete Logarithm Problem}
\titlerunning{A Subexponential Quantum Algorithm for Semidirect Discrete Logarithm}
\author{}
\institute{}

\author{
 Christopher~Battarbee\inst{1,5}
    \and
    Delaram~Kahrobaei\inst{1,2,3,4}
   \and
    Ludovic~Perret\inst{1}
    \and
    Siamak~F.~Shahandashti\inst{5}
}


\institute{
    Sorbonne University, CNRS, LIP6, PolSys, Paris, France
    \and
    Departments of Computer Science and Mathematics, Queens College, City University of New York, USA
    \and
    Initiative for the Theoretical Sciences, Graduate Center, City University of New York, USA
    \and
    Department of Computer Science and Engineering, Tandon School of Engineering, New York University, USA
    \and
    Department of Computer Science, University of York, UK
}

\authorrunning{C. Battarbee, D. Kahrobaei, L. Perret and S. F. Shahandashti}



\begin{document}
\maketitle

\begin{abstract} 
\textit{Group-based} cryptography is a relatively unexplored family in post-quantum cryptography, and the so-called Semidirect Discrete Logarithm Problem (\sdlp{}) is one of its most central problems. However, the complexity of the general case of \sdlp{} and its relationship to more well-known hardness problems, particularly with respect to its security against quantum adversaries, has not been well understood and was a significant open problem for researchers in this area. In this paper we give the first dedicated security analysis of the general case of \sdlp{}. In particular, we provide a connection between \sdlp{} and group actions, a context in which quantum subexponential algorithms are known to apply. We are therefore able to construct a subexponential quantum algorithm for solving \sdlp{}, thereby classifying the complexity of \sdlp{} and its relation to known computational problems. 
\end{abstract}

\section*{Introduction}
The goal of Post-Quantum Cryptography (\pqc{}) is to design cryptosystems which are secure against classical and quantum adversaries. A topic of fundamental research for decades, the status of \pqc{} drastically changed with the \nist{} \pqc{}  standardization process \cite{NISTPQ}.

In July $2022$, after five years and three rounds of selection, \nist{}  selected a first set of \pqc{} standards for Key-Encapsulation Mechanism (\kem) and Digital Signature Scheme  (\dss{}) protocols, based on lattices and hash functions. The  standardization process is still ongoing with a fourth round for \kem{} and a new \nist{} call for post-quantum \dss{} in $2023$. Recent attacks \cite{DBLP:journals/iacr/Beullens22,DBLP:conf/crypto/TaoPD21,DBLP:journals/iacr/BaenaBCPSV21} against round-$3$ multivariate signature schemes, \rain{} \cite{DBLP:journals/iacr/Beullens22} and \gemss{} \cite{casanova2017gemss}, as well as the cryptanalysis of round-$4$ isogeny based \kem{} \sike{} \cite{cryptoeprint:2022/975, cryptoeprint:2022/1026}, emphasise the need to continue the cryptanalysis effort in \pqc{} as well as to increase the diversity in the potential post-quantum hard problems.

A relatively unexplored family of such problems come from \textit{group-based} cryptography (see \cite{kahrobaeinotices,Kahrobaei-BattarbeeBook}).  In particular we are interested in the so-called Semidirect Discrete Logarithm Problem (\sdlp{}), which initially appears in the 2013 work of Habeeb et al.~\cite{habeeb2013public}. Roughly speaking, we generalise the standard notion of group exponentiation by employing products of the form $\phi^{x-1}(g)\cdot\ldots\cdot\phi(g)\cdot g$, where $g$ is an element of a (semi)group, $\phi$ is an endomorphism and $x\in\NN$ is a positive integer. Our task in \sdlp{} is to recover the integer $x$ given the pair $g,\phi$ and the value $\phi^{x-1}(g)\cdot\ldots\cdot\phi(g)\cdot g$. It turns out that products of this form have enough structure to be cryptographically useful, in a sense we will expand upon later - in particular, protocols based on \sdlp{} are plausibly post-quantum, since there is no known reduction of \sdlp{} to a Hidden Subgroup Problem.

By far the most studied such protocol is known as Semidirect Product Key Exchange (\sdpke{}), originally proposed in \cite{habeeb2013public} (note that this is the same work in which \sdlp{} first appears). It is a Diffie--Hellman-like key exchange protocol in which products of the form $\phi^{x-1}(g)\cdot\ldots\cdot\phi(g)\cdot g$ are exchanged between two parties in such a way as to allow both parties to recover the same shared key. Clearly, the security of \sdpke{} and the difficulty of \sdlp{} are heavily related -- in particular, an adversary able to solve \sdlp{} is also able to break \sdpke{}. 

There is therefore motivation to analyse the difficulty of \sdlp{}. However, prior to this work the general-case complexity of \sdlp{} and its relationship to more well-known hardness problems, particularly with respect to its security against quantum adversaries, has not been well understood and was a significant open problem for researchers in this area. In this paper, we provide the first dedicated analysis of \sdlp{}, obtaining two key contributions. First, we demonstrate that a subset of all possible products of the form $\phi^{x-1}(g)\cdot\ldots\cdot\phi(g)\cdot g$ is a set upon which a finite abelian group acts; in other words, that \sdpke{} is, modulo some context-specific technicality, a variant of the group action-based key exchange schemes originally proposed by Couveignes \cite{couveignes2006hard}. In particular, solving \sdlp{} can be translated into a problem with respect to a group action. This surprising connection provides a sharper classification of \sdpke{} than was previously known, and allows us to derive our second contribution, an application of known tools that gives a quantum algorithm for solving \sdlp{}. The algorithm runs in subexponential time $2^{\bigO{\sqrt{\log p}}}$, where $p$ parameterises the size of the group action.  

\subsection*{Related Work}
Examples of concrete proposals for \sdpke{} can be found in \cite{habeeb2013public, grigoriev2019tropical, kahrobaei2016using, rahman2022make, rahmanmobs}; respective cryptanalyses can be found in \cite{myasnikov2015linear, isaac2021closer, brown2021cryptanalysis, monico2021remarks, Battarbee_Kahrobaei_Shahandashti_2022, battarbee2021efficiency}. This body of work proceeds more or less chronologically, in that proposed platforms are a response to cryptanalysis addressing a weakness in an earlier version. 
A more detailed survey of the back-and-forth on this topic can be found in \cite{battarbee2022semidirect}.

Despite the relationship between \sdpke{} and \sdlp{}, none of the works discussed above provide an analysis of \sdlp{}. Indeed, the general direction of research in this area has been either to achieve shared key recovery by exploiting some underlying linearity of a platform (semi)group, or to find examples of (semi)groups with sufficiently lax structure to render these attacks less powerful. In particular, none of the cryptanalyses in this area solve \sdlp{}.

\subsubsection*{Recent algorithms for \sdlp{}.}\label{sdlp-algos} Since the preparation of this manuscript various works either related to it or advancing upon it have appeared. We detail them briefly below.

The authors of \cite{battarbee2023spdh} derive a digital signature scheme whose security is based on the difficulty of \sdlp{} in a certain $p$-group. Using similar techniques to that of this paper, they show that the resulting signature scheme admits quantum subexponential time forgeries.

Two papers dedicated to the analysis of \sdlp{}, meanwhile, have appeared rather recently. The first is the work of \cite{imran2023efficient}, whose basic idea is use a clever recursion tool in the quotients of the underlying group to reduce some instance of \sdlp{} to an instance of \sdlp{} in an elementary abelian group. Moreover, an existing quantum algorithm due to \cite{ivanyos2001efficient} allows one to compute the necessary series of quotients; one is therefore done after showing that this abelian instance of \sdlp{} is solved simply by solving a classical discrete logarithm problem in quantum polynomial time. The second of these papers is \cite{mendelsohn2023small}, which is more specifically tailored to the platform group proposed in \cite{battarbee2023spdh} -- again, however, it exploits the decomposition of that platform group into abelian constituents. Interestingly, \cite{mendelsohn2023small} makes progress towards demonstrating the quantum equivalence of \sdlp{} and the Computational Diffie-Hellman-like problem underpinning the security of \sdpke{}, highlighting the potential need for future vigilance against \sdpke{} attacks when considering the difficulty of \sdlp{}.

All three of these works make reference to a pre-print of this paper, and indeed make use of insights and techniques first defined therein. Nevertheless, the chief novelty of this work in comparison with its peers is the generality of the method. As one might expect from the requirement of the groups in \cite{imran2023efficient} to have some abelian quotient, the principal class of group admitting quantum polynomial algorithms for \sdlp{} are the solvable groups (though the work does present a couple of other examples related to matrix groups over finite fields). Our algorithm is not restricted to this class of groups, and in fact is able to solve \sdlp{} when the candidate group is replaced by a semigroup. On the other hand, in these solvable groups the method of \cite{imran2023efficient} runs in quantum polynomial time - i.e. a significant speed-up of our general method. We therefore remark that in order to understand where \sdlp{} fits in the general landscape of potentially post-quantum hard computational problems, it is important to consider both the current paper and the works discussed in this section.  

\subsubsection*{Related techniques.} Ideas much closer to the spirit of our work appear in papers that, at first glance, appear unrelated to \sdpke{} and \sdlp{}. Our results are achieved in part by careful synthesis of the techniques in the two papers of Childs et al. \cite{childs2014constructing, childs2014quantum}: since the set of all products of the form $\phi^{x-1}(g)\cdot\ldots\cdot\phi(g)\cdot g$ admits some similarity to that of a monogenic semigroup, we can adapt some ideas from a quantum algorithm in \cite{childs2014quantum} that solves the Semigroup Discrete Logarithm Problem. However, in our setting we are lacking some key structure that allows the direct application of \cite{childs2014quantum}. The full algorithm is constructed by adapting ideas in \cite{childs2014constructing}, allowing us to show the important quantum algorithms of Kuperberg \cite{kuperberg2005subexponential} and Regev \cite{regev2004subexponential} can be used to solve \sdlp{}.


\subsection*{Organisation of the Paper and Main Results}
The construction of the algorithm claimed in the title is, from a high-level perspective, achieved by two reduction proofs followed by an application of known algorithms. With this in mind, we make the following contributions. 

\subsubsection{Section~\ref{prelims}.}
We start with preliminaries in which the necessary background is reviewed. In particular we give a brief discussion of the relevant algebraic objects, a short note on quantum computation, and a full description of \sdpke{} - including a discussion of appropriate asymptotic assumptions for the growth of the semigroups we work with, derived from currently proposed platforms. In particular we derive a notion of a parametrised family of semigroups called an `easy' family of semigroups. The section finishes with a kind of glossary of computational problems. 

\subsubsection{Section~\ref{group-action-section}.} 
It will be immediately convenient to write $s(g,\phi,x)$ for $\phi^{x-1}(g)\cdot...\cdot g$, not only for clarity of notation but to aid the crucial shift in perspective offered in this paper. Armed with this notation our first task is to study the set of possible exponents\footnote{If $(g,\phi)$ is fixed we can think of $s(g,\phi,\cdot)$ as a function from $\NN$ into $G$ that generalises the usual group exponentiation map, in the sense that the usual exponentiation map is recovered if $\phi$ is the identity morphism. Within this intuition we shall feel comfortable referring to the values of $s$ relative to some fixed $(g,\phi)$ as `exponents'.} $\{s(g,\phi,i):i\in\mathbb{N}\}$. In Theorem~\ref{exponent-structure} we deduce, borrowing from some standard ideas in semigroup theory, that this set is finite and has the form $\{g,...,s(g,\phi,n),...,s(g,\phi,n+r-1)\}$ where the integers $n,r$ are a function of the choice of $(g,\phi)$ (and may, when desirable, be written $r_{g,\phi},n_{g,\phi}$ to highlight this point). The main result of this section is Theorem~\ref{free-transitive-action}: an abelian group acts freely and transitively on the set $\{s(g,\phi,n),...,s(g,\phi,n+r-1)\}$. This set is called the cycle of $g,\phi$ and is denoted by the calligraphic letter $\mathcal{C}$. In particular, we show the following:

\begin{theorem*}
Fix $(g,\phi)\in G\times End(G)$ and let $n,r$ be the index and period corresponding to $g,\phi$. Moreover, let $\mathcal{C}$ be the corresponding cycle of size $r$. The abelian group $\mathbb{Z}_r$ acts freely and transitively on $\mathcal{C}$.
\end{theorem*}

where $\ZZr$ is the usual notion of a group of residues modulo $r$. 

\subsubsection{Section~\ref{group-action-discrete-log}.}
If the set of exponents with respect to $(g,\phi)$ consisted entirely of the cycle we would immediately have a reduction to the Group Action Discrete Logarithm Problem (\gadlp{}), also referred to as \textit{Group Action DLog} in \cite{montgomery2022full}, or the \textit{Parallelisation Problem} in \cite{couveignes2006hard}. Roughly speaking, since we know that an abelian group acts on the cycle, in this case the exponent $x$ to be recovered is precisely the group element acting on $g$ to give $s(g,\phi,x)$. This is not, however, generally the case, and to proceed it will be necessary to extract the pair $n,r$ from the base values $g,\phi$. In Theorem~\ref{index-period-recovery} we show that one can achieve this in efficient quantum time by using canonical quantum period-finding methods. We can therefore deduce in Theorem~\ref{sdlp-to-gadlp} that one can solve \sdlp{} efficiently given access to a \gadlp{} oracle; or, if $s(g,\phi,x)$ is not in the cycle of $g,\phi$, by invoking a classical procedure exploiting the knowledge of $n,r$. Indeed, we show the following:

\begin{theorem*}
Let $\{G_p\}_p$ be an easy family of semigroups, and fix $p$. Algorithm~\ref{sd-ga} solves \sdlp{} with respect to a pair $(g,\phi)\in G_p\times End(G_p)$ given access to a \gadlp{} oracle for the group action $(\ZZ_{r_{g,\phi}}, \mathcal{C}_{g,\phi}, \oast)$. The algorithm runs in time $\bigO{(\log p)^4}$, makes at most a single query to the \gadlp{} oracle, and succeeds with probability $\bigOmega{1}$.  
\end{theorem*}

Clearly, many of the requisite notions in this statement have not yet been defined. Roughly speaking, an easy family of semigroups is a family of semigroups parameterised by $p$ such that each of the functions taking $p$ as an argument grows polynomially in $p$.

\subsubsection{Section~\ref{gadlp-algos}.} In order to give a full description and complexity analysis of the algorithm it remains to examine the state of the art for solving \gadlp{}. It is reasonably well-known (see \cite{stolbunov2010constructing}, \cite{childs2014constructing}) that \gadlp{} reduces to the Abelian Hidden Shift Problem if the action is free and transitive, though we provide a context-specific reduction in Theorem~\ref{gadlp-to-ahsp}. There are two popular choices to solve this problem: an algorithm due to Kuperberg \cite{kuperberg2005subexponential} and another due to Regev \cite{regev2004subexponential}, each of which has trade-offs with respect to time and space complexity. Finally, the full algorithm is given in Theorem~\ref{complete-algo}, though we are essentially assembling the components we have developed throughout the rest of the paper. This main result is the following:

\begin{theorem*}
    Let $\{G_p\}_p$ be an easy family of semigroups, and fix $p$. For any pair $(g,\phi)\in G_p\times End(G_p)$, there is a quantum algorithm solving \sdlp{} with respect to $(g,\phi)$ with time and query complexity $2^{\bigO{\sqrt{\log p}}}$. 
\end{theorem*}

which proves the claim of a quantum subexponential algorithm for \sdlp{} given in the title.

\section{Preliminaries}\label{prelims}
\subsection{Notation}
One need only be familiar with the standard $\bigO$ and $\bigOmega$ notations. Various bespoke notations are introduced throughout the course of the paper, but these will be defined in due course. We note also that all our logarithms are base-$2$.

\subsection{Background Mathematics}
We recall a number of group-theoretic notions used throughout this paper. 

Recall that a group for which one is not guaranteed to have inverse with respect to the group operation is a \textit{semigroup}. Writing the operation multipicatively, the semigroups $G$ we are interested in all have an element $1$ such that $1\cdot g=g=g\cdot 1$ for each $g\in G$ - such a semigroup is also called a \textit{monoid}. In this work we insist that we do not have a full group, and so write semigroups without meaning to refer to full groups.

We will deal with both abelian groups and non-abelian semigroups in this paper; that is, for a non-abelian semigroup $G$ one cannot expect that $g\cdot h=h\cdot g$ for all $g,h\in G$. For the sake of clarity we will write abelian groups additively. In this case, the operation $g+h$ commutes, and we require an inverse for every element. Note also that the identity is written as $0$ in this case. 

Consider a function from $G$ to itself, say $\phi$. If $\phi$ preserves multiplication - that is, $\phi(g\cdot h)=\phi(g)\cdot \phi(h)$ for each $g,h\in G$ - we call it an \textit{endomorphism}. Certainly we can compose these functions according to the usual notion, and indeed it is standard that the set of all endomorphisms under function composition defines a semigroup. Since we allow for (and in some cases require) that the endomorphisms are not invertible, we have a semigroup rather than a full group. In particular, every finite semigroup $G$ immediately induces an \textit{endomorphism semigroup}, denoted $End(G)$. 

An important, and in this context frequently invoked source of semigroups come from \textit{matrix algebras}. The set of square matrices of fixed size with entries in some ring $R$ forms an $R$-module, since we can add matrices together and scale each entry of a matrix by some $r\in R$. The necessary distributivity properties are inherited from the properties of $R$. However, unlike in a usual $R$-module, we can also \textit{multiply} elements just by defining multiplication to be the usual notion of matrix multiplication. The resulting matrix algebra is denoted $M_n(R)$, where $n\in\mathbb{N}$ is the fixed size of matrix, and $R$ is the underlying ring. Indeed, consider a matrix algebra under only the multiplication operation. It is again clear that this object is a semigroup; we would have a full group if every matrix was invertible, but of course this is not true. The all-zero matrix, for example, has no multiplicative inverse. A matrix algebra considered only under its multiplication, therefore, is a useful source for concrete examples of semigroups. 

It will be useful for us to build a new semigroup from an existing semigroup. One way of doing this is via a structure called the \textit{holomorph}. Let $G$ be a (semi)group and $End(G)$ its endomorphism semigroup. The holomorph $G\ltimes End(G)$ is the set $G\times End(G)$ equipped with multiplication
\[(g,\phi)\cdot (g',\phi') = (\phi'(g)\cdot g',\phi'\circ\phi)\]
where $\circ$ refers to function composition. In fact, the holomorph is itself a special case of the semidirect product, hence the `semidirect' terminology found throughout this paper.

\subsubsection{Group Actions.}\label{prelims-group-actions}
A key idea for us will be that of a group action, and in particular a commutative group action. Roughly speaking such an object allows one to map elements of a set to each other in a cryptographically useful fashion, but in a less structured manner than in more classical settings. More formally:
\begin{definition}[Commutative Group Action]\label{defn-group-actions}
    Let $G$ be a finite abelian group and $X$ be a finite set. Consider a function from $G\times X\to X$, written by convention as $g\star x$, with the following properties:
    \begin{enumerate}
        \item $1\star x=x$
        \item $(g+ h)\star x = g\star (h\star x)$
    \end{enumerate}
    The tuple $(G,X,\star)$ is a \textit{commutative group action}. If only the identity fixes an arbitrary element of $X$ the action is \textit{free}, and if for any $x,y\in X$ there is a $g$ such that $g\star x=y$ the action is \textit{transitive}.
\end{definition}
The group action defined in this paper is commutative, so we will sometimes just write `group action' to mean a commutative group action. It will remain for us, however, to prove that this action is free and transitive. If the action is indeed free and transitive, it follows that for any $x\in X$, all $y\in X$ are such that there exists a unique $g\in G$ with $g\star x=y$. Borrowing notation from Couveignes \cite{couveignes2006hard}, it will sometimes be convenient for us to write $\delta(y,x)$ to denote this value.

\subsection{Quantum Computation}
In order to present our quantum algorithm for \sdlp{} (Section~\ref{group-action-discrete-log}), the reader needs only be familiar with standard quantum tools, presented for example in \cite{hirvensalo2003quantum}. We give a brief summary of the required notions below.

Recall that $n$ qubits can be represented by the complex vector space $H_{2^n}$, where the basis states are exactly the $n$-fold tensor products of basis states of $H_2$. An ordered system of $n$ qubits is called a \textit{quantum register of length $n$}, and the basis states are sometimes written $\{\ket{i}:0\leq i<2^n\}$ by identifying $i$ with its binary representation. 

Recall also that we can create the uniform superposition efficiently with a Hadarmard gate. For an $l$-qubit register, the computational basis vector $\ket{0}$ is such that the Hadamard gate (written $H_{2^l}$) is such that 
\[H_{2^l}\ket{0}=\frac{1}{\sqrt{2^l}}\sum_{i=0}^{2^l-1}\ket{i}\]
Moreover, this transformation can be carried out efficiently, in time $\bigO{l}$.

\subsection{Semidirect Product Key Exchange}\label{prelimsSDPKE}
We here define in full \sdpke{}. One verifies by induction that holomorph exponentiation takes the form 
\[(g,\phi)^x=(\phi^{x-1}(g)\cdot\ldots\phi(g)\cdot g, \phi^x)\]
where $\phi^x$ denoted the endomorphism $\phi$ composed with itself $x$ times. Note that this operation involves multiplying (semi)group elements, endomorphisms, and applying an endomorphism to a semigroup element. If all these operations are efficient, the holomorph exponentiation is efficient since one can apply standard square-and-multiply techniques.

The central idea of \sdpke{} is to use products of these form as a generalisation of Diffie--Helman Key-Exchange. Suppose $N$ is the number of all possible distinct holomorph exponents - there are finitely many - then the protocol works as follows:
\begin{enumerate}
    \item Suppose Alice and Bob agree on a public (semi)group $G$ and hence the integer $N$, as well as a group element $g$ and endomorphism of $G$, say $\phi$.
    \item Alice picks a secret integer $x$ uniformly at random from $\{1,...,N\}$, and calculates the holomorph exponent $(g,\phi)^x=(A,\phi^x)$. She sends \textbf{only} $A$ to Bob.
    \item Bob similarly calculates $(B,\phi^y)$ corresponding to a random, private integer $y$, and sends only $B$ to Alice.
    \item With her private automorphism $\phi^x$ Alice can now calculate her key as the group element $K_A=\phi^x(B)\cdot A$; Bob similarly calculates his key $K_B=\phi^y(A)\cdot B$.
\end{enumerate}
We have
\begin{align*}
    \phi^x(B)\cdot A &= \phi^x(\phi^{y-1}(g)\cdot\ldots\cdot g)\cdot (\phi^{x-1}(g)\cdot\ldots\cdot g) \\
    &= (\phi^{x+y-1}\cdot\ldots\cdot \phi^x(g)) \cdot (\phi^{x-1}(g)\cdot\ldots\cdot g) \\
    &= (\phi^{x+y-1}\cdot\ldots\cdot\phi^y(g))\cdot(\phi^{y-1}(g)\cdot\ldots\cdot g) \\
    &= \phi^y(A)\cdot B
\end{align*}
so $K:=K_A=K_B$. Note that $A\cdot B\neq K$ as a consequence of our insistence that the endomorphism $\phi$ is non-trivial - note that in the closed form derived in \cite{habeeb2013public} when $\phi$ is a conjugation, we also require $G$ to be non-commutative.

Writing these products in full will quickly become rather cumbersome. We therefore introduce some non-standard notation, which is useful both for convenience of exposition and the required shift in perspective we will introduce in this paper.
\begin{definition}\label{semidirect-product-function}
Let $G$ be a finite, non-commutative (semi)group, $g\in G$, and $\phi\in End(G)$. We define the following function: 
\begin{align*}
    s:G\times End(G)\times \mathbb{N} &\to G \\
     (g,\phi,x) &\mapsto \phi^{x-1}(g)\cdot\ldots\cdot \phi(g)\cdot g   
\end{align*}
\end{definition}

Notice that when $g,\phi$ are fixed - as in the case of the key exchange - the function $s$ is really only taking integer arguments, analogously to the standard notion of group exponentiation. Indeed, a passive adversary observing a round of \sdpke{} has access to the values $s(g,\phi,x)$ and $s(g,\phi,y)$ - in order to recover the shared key $s(g,\phi,x+y)$ one strategy they might adopt is to recover the private integers $x,y$ from $s(g,\phi,x),s(g,\phi,y)$ to allow calculation of $s(g,\phi,x+y)$. In short, the security of \sdpke{} is clearly in some sense related to the Semidirect Discrete Logarithm Problem alluded to in the introduction. We shall have much more to say about this later on.

\subsection{Efficiency Considerations}\label{prelims-efficiency}
The works discussed in the introduction, as well as the contents of the more comprehensive survey \cite{battarbee2022semidirect}, highlight that every extant proposal of a platform for \sdpke{} suggests for use some variety of matrix algebra. In particular, insofar as parameters are recommended, the convention is to fix a matrix size - usually $3$ - and adjust the size of an underlying ring in order to increase security. 

In other words, having defined \sdpke{} above relative to some semigroup $G$ and its endomorphism semigroup $End(G)$, we can think of each such semigroup as one of a family of semigroups $\{G_p\}_p$, where the family $\{G_p\}_p$ is indexed by some set parameterising the underlying algebra (usually the primes). Note that this immediately induces a family of endomorphism semigroups $\{End(G_p)\}_p$, so we can talk about pairs $(g,\phi)$ from the set $G_p\times End(G_p)$ for each $p$. 

Table~\ref{platform-growth} gives examples of platforms over $3\times 3$ matrices, the size of the platform, and the variable that can be considered as the indexing variable\footnote{Note here that $|R|$ is chosen as the parameter for reasons of efficiency of representation.}. 
\begin{table}
    \centering
    \caption{Growth of Proposed Platforms}
    \setlength{\tabcolsep}{0.5em} 
    \renewcommand{\arraystretch}{1.3}
    \begin{tabular}{c|c|c}
      Proposed Platform & Size of Platform & Indexing Variable \\
      \hline
      \hline
      $M_3(G[R])$ & $|R|^{9|G|}$ & $|R|$\\
      Certain classes of $p$-group & Polynomial in prime $p$ & Prime $p$ \\
      $M_3(\mathbb{Z}_p)$ & $p^9$ & $p$ \\
    \end{tabular}
    \label{platform-growth}
\end{table}

In each of these examples we have a family of semigroups indexed by some set $P$ such that each semigroup $G_p$ has size polynomial in $p$. We will give complexity estimates as a function of $p$ - indeed, let us now see how the complexity of executing \sdpke{} grows with $p$. For a semigroup $G$, note that with respect to the holomorph $G\ltimes End(G)$ we have $(g,\phi)^x=(s(g,\phi,x),\phi^x)$ by definition. By standard square-and-multiply techniques it therefore requires $\bigO{\log x}$ applications of the operation in the holomorph to compute $s(g,\phi,x)$.

In order to estimate the complexity of the holomorph operation we need to know the complexity of multiplication in $G$ and that of applying the endomorphism $\phi$. In this direction we note that another characteristic of the currently proposed platforms is that the endomorphisms suggested for use with \sdpke{} typically involve multiplication by one or more auxiliary matrices; that is,  for a particular semigroup $G_p$, if $(g,\phi)\in G_p\times End(G_p)$ the group element $\phi(g)$ has the form $A\cdot g\cdot B$, where $A,B\in G$ are fixed. If the matrix size is fixed each application of $\phi$ therefore requires some constant number of operations in the underlying ring of the matrix semigroup, which we may assume has size polynomial in $p$. The complexity of this matrix multiplication will be dominated by the multiplication in the underlying ring. Since the size of the underlying ring is also polynomial in $p$, each multiplication has complexity $\bigO{(\log p)^2}$ (since $\bigO{\log poly(p)}=\bigO{\log p}$). We conclude that both multiplication of elements in $G_p$, and evaluation  of $\phi(g)$, can be done in time $\bigO{(\log p)^2}$.

With these observations in mind, we define the following:
\begin{definition}
    Let $P$ some countable indexing set. A family of semigroups $\{G_p\}_{p\in P}$ is said to be \textit{easy} if
    \begin{enumerate}
        \item $|G_p|$ grows monotonically and polynomially in $p$
        \item For any $p$, any tuple $(g,h,\phi)\in G_p\times G_p\times End(G_p)$ is such that $g\cdot h$ and $\phi(g)$ can be evaluated in time $\bigO{(\log p)^2}$.
    \end{enumerate}
\end{definition}

Many of the complexity results within the paper assume that we are dealing with an easy family of semigroups, basically in an attempt to model the behaviour of suggested examples of semigroup family. 

\subsection{Computational Problems}
We have already alluded to some of the hard problems to be found in this paper. Here we give full definitions of all of them to serve as a kind of `glossary' section.

\begin{definition}[Semidirect Discrete Logarithm Problem]
    Given a public (semi)group $G$, its public endomorphism semigroup $End(G)$ and a public pair $(g,\phi)\in G\times End(G)$, let $N$ be the size of the set $\{s(g,\phi,i):i\in\mathbb{N}\}$. Choose $x$ from $\{1,...,N\}$ uniformly at random, calculate $s(g,\phi,x)$ and create the pair $((g,\phi),s(g,\phi,x))$. The Semidirect Discrete Logarithm Problem (\sdlp{}) with respect to $(g,\phi)$ is to recover the integer $x$ given the pair $(g,\phi)$ and $s(g,\phi,x)$.
\end{definition}


We now give the computational problem that we seek to reduce to. This version of the problem is taken from \cite{montgomery2022full}, but can be found as the \textit{vectorisation} problem, respectively, in \cite{couveignes2006hard}.

\begin{definition}[Group Action Discrete Logarithm]
    Given a public commutative group action $(G,X,\star)$, sample $g\in G$ and $x\in X$ uniformly at random, compute $y=g\star x$ and create the pair $(x,y)$. The Group Action Discrete Logarithm Problem (\gadlp{}) with respect to $x$ is to recover $g$ given the pair $(x,y)$.
\end{definition}



\begin{remark}
    The idea of \sdpke{} is used in this paper to motivate the study of \sdlp{}. However, as in the classical case, the security of \sdpke{} is not known to be precisely equivalent to \sdlp{}, and indeed one can define group action-related and semidirect product-related problems in the style of the Computational Diffie-Hellman problem. Studying these problems is beyond the scope of this work.
\end{remark}

Finally we give a seemingly unrelated problem requiring a small amount of introduction. Let $f,g:A\to S$ be injective functions, where $S$ is a set and $A$ is a finite abelian group. We say that $f,g$ \textit{hide} some $s\in A$ if one has $g(a)=f(a+s)$ for each $a\in A$. 

\begin{definition}[Abelian Hidden Shift Problem]\label{hidden-shift}
    Given a public abelian group $A$ and a set $S$, suppose two injective functions $f,g$ hide some $s\in A$. The Abelian Hidden Shift Problem (\ahsp{}) is to recover the group element $s$.
\end{definition}

\section{Structure of the Exponents}\label{group-action-section}

All of the algorithms in this paper rely on the construction of a certain group action - recall that such an object consists of a group, a set, and a function (Section \ref{prelims-group-actions}, Definition \ref{defn-group-actions}). As a general outline to our strategy, we first define and deduce properties of a particular set, from which the appropriate group and function will follow. 

With this in mind, we make the following definition. For now we will dispense with our notion of an easy, parameterised family of semigroups, since the results presented in this section apply to any fixed semigroup. In fact, for compactness of exposition, for the remainder of this section by $G$ we mean an arbitrary finite (semi)group, and by $End(G)$ we mean its associated endomorphism semigroup.
\begin{definition}
For a pair $(g,\phi)\in G\times End(G)$, define
\[\mathcal{X}_{g,\phi}:=\{s(g,\phi,i):i\in\mathbb{N}\}\]
\end{definition}
We will often write $\mathcal{X}_{(g,\phi)}$ as $\mathcal{X}$ when clear from context. Certainly this object is neither a group nor a semigroup - numerous counterexamples can be found whereby multiplication of elements in this set are not contained in the set - but we can make some progress by borrowing from the standard theory of monogenic semigroups; presented, for example, in \cite{howie1995fundamentals}. Since $\mathcal{X}\subset G$, $\mathcal{X}$ is finite --- the set $\{x\in\mathbb{N}:\exists y\neq x\quad s(g,\phi,x)=s(g,\phi,y)\}$ must therefore be non-empty, or the set would be in bijection with the natural numbers, contradicting the fact that $G$ is finite. We may therefore choose the smallest element of this set, say $n$. By definition of $n$ the set $\{x\in\mathbb{N}:s(g,\phi,n)=s(g,\phi,n+x)\}$ must also be non-empty, so we may again pick its smallest element and call it $r$. 

The structure of $\mathcal{X}$ is further restricted by the following result:
\begin{lemma}\label{splitting-property}
Let $(g,\phi)\in G\times End(G)$ and $x,y\in\mathbb{N}$, then 
\[\phi^x\left(s(g,\phi,y)\right)\cdot s(g,\phi,x)=s(g,\phi,x+y)\]
\end{lemma}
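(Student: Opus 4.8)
The plan is to prove the identity
\[\phi^x\big(s(g,\phi,y)\big)\cdot s(g,\phi,x) = s(g,\phi,x+y)\]
by a direct computation using the definition of $s$, the fact that $\phi$ is an endomorphism, and the associativity of the semigroup operation. First I would expand the left-hand side using Definition~\ref{semidirect-product-function}: $s(g,\phi,y) = \phi^{y-1}(g)\cdot\ldots\cdot\phi(g)\cdot g$ and $s(g,\phi,x) = \phi^{x-1}(g)\cdot\ldots\cdot\phi(g)\cdot g$. The key observation is that since $\phi$ is an endomorphism, $\phi^x$ distributes over the product: $\phi^x\big(\phi^{y-1}(g)\cdot\ldots\cdot g\big) = \phi^{x+y-1}(g)\cdot\phi^{x+y-2}(g)\cdot\ldots\cdot\phi^x(g)$. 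This uses $\phi^x(a\cdot b)=\phi^x(a)\cdot\phi^x(b)$ iterated, which follows because a composition of endomorphisms is an endomorphism.

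Concatenating, the left-hand side becomes
\[\big(\phi^{x+y-1}(g)\cdot\ldots\cdot\phi^x(g)\big)\cdot\big(\phi^{x-1}(g)\cdot\ldots\cdot\phi(g)\cdot g\big),\]
and by associativity this is precisely the product $\phi^{x+y-1}(g)\cdot\phi^{x+y-2}(g)\cdot\ldots\cdot\phi(g)\cdot g = s(g,\phi,x+y)$. Note that this is exactly the manipulation already carried out in the correctness verification of \sdpke{} in Section~\ref{prelimsSDPKE} (the displayed chain of equalities showing $K_A=K_B$), so the lemma is essentially an abstraction of that computation; I would cross-reference it to avoid repetition.

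To make the argument fully rigorous rather than relying on informal ellipsis notation, I would phrase it as an induction on $x$. The base case $x=1$ reads $\phi\big(s(g,\phi,y)\big)\cdot g = s(g,\phi,y+1)$, which is immediate from the definition since $\phi\big(\phi^{y-1}(g)\cdot\ldots\cdot g\big)\cdot g = \phi^{y}(g)\cdot\ldots\cdot\phi(g)\cdot g$. For the inductive step, assuming the claim for $x$, one writes $s(g,\phi,x+1) = \phi^{x}(g)\cdot s(g,\phi,x)$ and also uses $s(g,\phi,x+1+y)$ rewritten appropriately; applying $\phi$ to the inductive hypothesis and multiplying on the right by $g$, together with $\phi\big(\phi^x(a)\big)=\phi^{x+1}(a)$, yields the result. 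Either the direct ellipsis argument or the induction is routine.

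There is no serious obstacle here: the only thing to be careful about is the indexing convention in Definition~\ref{semidirect-product-function} (in particular what $s(g,\phi,1)$ and, if needed, $s(g,\phi,0)$ denote) and making sure the endomorphism property is invoked for the iterate $\phi^x$ rather than just for $\phi$ — but since $\phi^x = \phi\circ\cdots\circ\phi$ is a composition of endomorphisms, it is itself an endomorphism, so $\phi^x$ preserves products. The lemma is genuinely a one-line computation dressed up, and I would keep the proof correspondingly short.
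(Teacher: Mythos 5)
Your proposal is correct and follows essentially the same route as the paper's proof: expand both sides via Definition~\ref{semidirect-product-function}, use that $\phi^x$ preserves products to shift every exponent in $s(g,\phi,y)$ up by $x$, and observe that right-multiplication by $s(g,\phi,x)$ supplies the remaining factors of $s(g,\phi,x+y)$. The optional induction you sketch is just a more formal dressing of the same computation, so there is nothing further to add.
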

\begin{proof}
Note that $s(g,\phi, x+y)=\phi^{x+y-1}(g)\cdot\ldots\cdot g$. Since $\phi$ preserves multiplication, applying $\phi^x$ to $s(g,\phi,y)$ adds $x$ to the exponent of each term. Multiplication on the right by $s(g,\phi,x)$ then completes the remaining terms of $s(g,\phi,x+y)$.
\qed 
\end{proof}

\begin{remark}
One can entirely symmetrically swap the roles of $x$ and $y$ in the above argument, which gives two ways of calculating $s(g,\phi,x+y)$. In essence, therefore, this result gives us a slightly more elegant proof of the correctness of \sdpke{}.
\end{remark}

This method of inducing addition in the integer argument of $s$ is sufficiently important that we will invoke a definition for it.

\begin{definition}\label{ast-function}
    Let $(g,\phi)\in G\times End(G)$ and define a function $f:\mathbb{N}\times \mathcal{X}\to \mathcal{X}$ by 
    \[f(i,s(g,\phi,j))=\phi^i(s(g,\phi,j))\cdot s(g,\phi,i)\]
    where $f(i,s(g,\phi,j))$ may also be written as $i\ast s(g,\phi,j)$.
\end{definition}

\begin{remark}
    Strictly speaking the $\ast$ operation depends on a choice of pair $(g,\phi)$ and so should be written $\ast_{g,\phi}$. However, since the remainder of the paper deals with \sdlp{} when a choice of such a pair is fixed, we will suppress this detail.
\end{remark}

\begin{remark}
    If $G$ is of the type discussed in Section~\ref{prelims-efficiency} -i.e., $G=G_p$ is one of a family of easy semigroups - the value $i\ast s(g,\phi,j)$ can be computed in time $\bigO{(\log i)(\log p)^2}$. provided $s(g,\phi,j)$ is already known. This is because to compute $\phi^i(s(g,\phi,j))$ requires the computation and evaluation of $\phi^i$, the computation of $s(g,\phi,i)$, and some fixed number of multiplications in $G$ - but we know from Section~\ref{prelims-efficiency} that one can calculate $(g,\phi)^i=(s(g,\phi,i),\phi^i)$ in time $\bigO{(\log i)(\log p)^2}$. 
\end{remark}

Thus far we have established that corresponding to any fixed pair $(g,\phi)\in G\times End(G)$ is a set $\mathcal{X}_{g,\phi}=\mathcal{X}$ and a pair of integers $n,r$. By Lemma~\ref{splitting-property} we know that $i\ast s(g,\phi,j)=s(g,\phi,i+j)$ for any $i,j\in\mathbb{N}$, so by definition of $n,r$ we have
\begin{align*}
    s(g,\phi,n+2r) &= r\ast s(g,\phi,n+r) \\
    &= r \ast s(g,\phi,n) \\
    &= s(g,\phi,n+r) = s(g,\phi,n)
\end{align*}

We conclude, by extending this argument in the obvious way, that $s(g,\phi,n+qr)=s(g,\phi,n)$ for each $q\in\mathbb{N}$. In fact, we have the following:
\begin{lemma}\label{r-identity}
Fix $(g,\phi)\in G\times End(G)$ and let $n,r$ be the corresponding integer pair as above. One has that 
\[s(g,\phi,n+x+qr)=s(g,\phi,n+x)\]
for all $x,q\in\mathbb{N}$.
\end{lemma}

We will frequently invoke Lemma~\ref{r-identity}. Indeed, we immediately get that the set $\mathcal{X}$ cannot contain values other than $\{g,...,s(g,\phi,n),...,s(g,\phi,n+r-1)\}$. If any of the values in $\{g,...,s(g,\phi,n-1)\}$ are equal we contradict the minimality of $n$, and if any of the values in $\{s(g,\phi,n),...,s(g,\phi,n+r-1)\}$ are equal we contradict the minimality of $r$. We have shown the following:

\begin{theorem}\label{exponent-structure}
Fix $(g,\phi)\in G\times End(G)$. The set $\mathcal{X}=\{s(g,\phi,i):i\in\mathbb{N}\}$ has size $n+r-1$ for integers $n,r$ dependent on $g,\phi$. In particular
\[\mathcal{X}=\{g,...,s(g,\phi,n),...,s(g,\phi,n+r-1)\}.\]
\end{theorem}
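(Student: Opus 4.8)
The plan is to formalize the two observations that are already informally present in the paragraph preceding the statement, namely that $\mathcal{X}$ contains nothing beyond $\{g,\dots,s(g,\phi,n+r-1)\}$, and that all these listed elements are distinct. First I would invoke Lemma~\ref{r-identity} to handle containment: for any $i\in\mathbb{N}$, either $i\le n+r-1$ and $s(g,\phi,i)$ is literally one of the listed elements, or $i\ge n$, in which case write $i = n + x + qr$ with $0\le x\le r-1$ and $q\ge 0$ (this is just Euclidean division of $i-n$ by $r$), and conclude $s(g,\phi,i)=s(g,\phi,n+x)$, which again appears in the list. Hence $\mathcal{X}\subseteq\{g,\dots,s(g,\phi,n+r-1)\}$, and since the reverse inclusion is trivial, equality holds. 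This shows $|\mathcal{X}|\le n+r-1$ (there are $n-1$ elements $g=s(g,\phi,1),\dots,s(g,\phi,n-1)$ followed by $r$ elements $s(g,\phi,n),\dots,s(g,\phi,n+r-1)$, for a total of $n+r-1$ listed values).

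Next I would argue the listed values are pairwise distinct, which gives the matching lower bound $|\mathcal{X}|\ge n+r-1$. Split into cases according to where two supposedly-equal indices lie. If $s(g,\phi,a)=s(g,\phi,b)$ with $1\le a < b\le n+r-1$: when $b\le n$ this contradicts the minimality of $n$, since $n$ was chosen as the \emph{smallest} index admitting a strictly larger index with the same $s$-value (so no index below $n$ can collide with anything). When $b> n$, one checks using Lemma~\ref{splitting-property} / Lemma~\ref{r-identity} that a collision among indices in $\{n,\dots,n+r-1\}$, or between one such index and a smaller one, would force a period strictly smaller than $r$: concretely, if $n\le a<b\le n+r-1$ then $s(g,\phi,n) = (a-n)\ast s(g,\phi,n) \cdots$ — more cleanly, applying the map $f(\,\cdot\,)$ appropriately one gets $s(g,\phi,n) = s(g,\phi,n+(b-a))$ with $0<b-a<r$, contradicting minimality of $r$. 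The mixed case, $a<n\le b$, is ruled out similarly: it would give $s(g,\phi,a)=s(g,\phi,b)$ with $b>a$, again contradicting minimality of $n$ (as $a<n$).

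The only mildly delicate point — and the step I expect to need the most care — is making the reduction-to-smaller-period argument fully rigorous using only Lemma~\ref{splitting-property} and Lemma~\ref{r-identity}, since those lemmas let us \emph{add} to the integer argument of $s$ but not obviously subtract. The clean way around this is to avoid cancellation entirely: from any putative collision among the cyclic part, produce a collision of the specific shape $s(g,\phi,n) = s(g,\phi,n+d)$ for some $d$ with $0<d<r$ by translating both indices up to a common multiple of $r$ using Lemma~\ref{r-identity}, then read off $d$ as a difference that is visibly positive and smaller than $r$; this directly contradicts the choice of $r$ as the smallest such value. Once these distinctness cases are dispatched, combining the two inclusions yields $|\mathcal{X}| = n+r-1$ and the explicit description in the statement, completing the proof.
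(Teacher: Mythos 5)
Your proposal is correct and follows essentially the same route as the paper, which proves this statement in the short paragraph preceding it: containment follows from Lemma~\ref{r-identity}, collisions below index $n$ contradict the minimality of $n$, and collisions within $\{s(g,\phi,n),\dots,s(g,\phi,n+r-1)\}$ contradict the minimality of $r$. Your extra care with the mixed tail/cycle case and with producing a collision of the exact shape $s(g,\phi,n)=s(g,\phi,n+d)$, $0<d<r$, by translating upward with $\ast$ (which is legitimate since $f(i,\cdot)$ depends only on the group element, not its index) fills in details the paper leaves implicit, but the underlying argument is the same.
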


We refer to the set $\{g,...,s(g,\phi,n-1)\}$ as the \textit{tail}, written $\mathcal{T}_{g,\phi}$, of $\mathcal{X}_{g,\phi}$; and the set $\{s(g,\phi,n),...,s(g,\phi,n+r-1)\}$ as the \textit{cycle}, written $\mathcal{C}_{g,\phi}$, of $\mathcal{X}_{g,\phi}$. The values $n_{g,\phi}$ and $r_{g,\phi}$ are called the \textit{index} and \textit{period} of the pair $(g,\phi)$. We shall feel free to omit the subscript at will when clear from context. 

One can see that unique natural numbers correspond to each element in the tail, but infinitely many correspond to each element in the cycle. In fact, each element of the cycle corresponds to a unique residue class modulo $r$, shifted by the index $n$. This is a rather intuitive fact, but owing to its usefulness we will record it formally. In the following we assume the function $\mod$ returns the canonical positive residue.

\begin{theorem}\label{exponent-congruence}
Fix $(g,\phi)\in G\times End(G)$ and let $x,y\in\mathbb{N}$. We have 
\[s(g,\phi,n+x)=s(g,\phi,n+y)\]
if and only if $x\mod r=y\mod r$.
\end{theorem}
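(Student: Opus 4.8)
The plan is to prove both implications directly from the two lemmas just established. For the backward direction, suppose $x \bmod r = y \bmod r$. Without loss of generality take $x \le y$, so that $y = x + qr$ for some non-negative integer $q$; then Lemma~\ref{r-identity} gives $s(g,\phi,n+y) = s(g,\phi,n+x+qr) = s(g,\phi,n+x)$ at once (the case $q = 0$ being a tautology).

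For the forward direction, the idea is to reduce to offsets lying in $\{0,\dots,r-1\}$ and then invoke distinctness of the cycle elements. Write $x' = x \bmod r$ and $y' = y \bmod r$, so $0 \le x', y' < r$. Applying the backward direction to the pairs $(x,x')$ and $(y,y')$ (equivalently, Lemma~\ref{r-identity} directly) yields $s(g,\phi,n+x) = s(g,\phi,n+x')$ and $s(g,\phi,n+y) = s(g,\phi,n+y')$, so the hypothesis $s(g,\phi,n+x) = s(g,\phi,n+y)$ becomes $s(g,\phi,n+x') = s(g,\phi,n+y')$. Now both $s(g,\phi,n+x')$ and $s(g,\phi,n+y')$ lie in the cycle $\mathcal{C}_{g,\phi} = \{s(g,\phi,n),\dots,s(g,\phi,n+r-1)\}$, and Theorem~\ref{exponent-structure} tells us these $r$ elements are pairwise distinct (this is exactly the minimality of $r$ used in its proof: if two of them coincided, say $s(g,\phi,n+a) = s(g,\phi,n+b)$ with $0 \le a < b < r$, then $s(g,\phi,n) = s(g,\phi,n+(b-a))$ by Lemma~\ref{r-identity}, contradicting the choice of $r$ as the least period). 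Hence $n + x' = n + y'$, i.e. $x \bmod r = y \bmod r$, as required.

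I do not expect a genuine obstacle here: the entire content sits in the periodicity statement of Lemma~\ref{r-identity} together with the pairwise distinctness of the cycle elements recorded in Theorem~\ref{exponent-structure}. The only thing to watch is the bookkeeping around the convention for $\mathbb{N}$ and the degenerate case $q = 0$, which is handled trivially.
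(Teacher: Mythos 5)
Your proof is correct. The backward direction is essentially the paper's (both rest on Lemma~\ref{r-identity}; your version, writing $y=x+qr$ directly, is if anything slightly more economical). The forward direction is where you genuinely diverge: you give a direct argument --- reduce both offsets to their canonical residues $x',y'\in\{0,\dots,r-1\}$ via Lemma~\ref{r-identity}, observe that both resulting elements lie in the cycle, and conclude $x'=y'$ from the pairwise distinctness of the $r$ cycle elements already recorded in Theorem~\ref{exponent-structure} (whose size claim $|\mathcal{X}|=n+r-1$ is exactly that distinctness). The paper instead argues by contradiction: assuming $x\not\equiv y \pmod r$, it writes $y=x'+u+qr$ with $0<u<r$ and splits into three cases ($x'+u=r$, $x'+u<r$, $x'+u>r$), each contradicting minimality of $r$ or distinctness. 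Both arguments use the same two ingredients, but yours is cleaner --- once distinctness is in hand, the case analysis is redundant. One small remark: your parenthetical re-derivation of distinctness needs an extra step you elide --- from $s(g,\phi,n+a)=s(g,\phi,n+b)$ with $a<b$ one should first act by $r-a$ on both sides (via Lemma~\ref{splitting-property}) and only then invoke Lemma~\ref{r-identity} to land at $s(g,\phi,n)=s(g,\phi,n+(b-a))$; since you are merely re-justifying a fact Theorem~\ref{exponent-structure} already asserts, this does not affect correctness.
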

\begin{proof}
In the reverse direction, setting $x'=x\mod r$ and $y'=y\mod r$, we have by Lemma~\ref{r-identity} that $s(g,\phi,n+x)=s(g,\phi,n+x')$ and $s(g,\phi,n+y)=s(g,\phi,n+y')$. By assumption $x'=y'$, and $0\leq x',y'<r$. The claim follows since  we know values in the range $\{s(g,\phi,n),...,s(g,\phi,n+r-1)\}$ are distinct by Theorem~\ref{exponent-structure}. 

On the other hand, suppose $s(g,\phi,n+y)=s(g,\phi,n+x)$ but $x\not\equiv y\mod r$. Without loss of generality we can write $y=x'+u+qr$ for some $q\in\mathbb{N}, 0<u<r$ and $x'=x\mod r$. By Lemma~\ref{r-identity}, since $s(g,\phi,n+y)=s(g,\phi,n+x)$ we must have 
\[s(g,\phi,n+x')=s(g,\phi,n+x'+u)\] 
where $s(g,\phi,n+x)=s(g,\phi,n+x')$ also by Lemma~\ref{r-identity}. There are now three cases to consider; we claim each of them gives a contradiction.

First, suppose $x'+u=r$, then $s(g,\phi,n+x')=s(g,\phi,n)$. Since $x'<r$ we contradict minimality of $r$. The case $x'+u<r$ gives a similar contradiction. 

Finally, if $x'+u>r$, without loss of generality we can write $x'+u=r+v$ for some positive integer $v$, so we have $s(M,\phi,n+x')=s(M,\phi,n+v)$. Since $x'\neq v$ (else we contradict $u<r$), and both values are strictly less than $r$, we have a contradiction, since distinct integers of this form give distinct evaluations of $s$. 
\qed 
\end{proof}

\subsection{A Group Action}

It should be clear by now that we are interested in the argument of $s$ in terms of residue classes modulo $r$. Recall that the group of residue classes modulo $r$ is denoted $\ZZr$, and its elements are written as $\modd{i}$. We conclude the section by constructing the action of $\ZZr$ on the cycle $\{s(g,\phi,n),...,s(g,\phi,n+r-1)\}$, where we assume that the operator $\mod r$ returns the unique integer in $\{0,...,r-1\}$ associated to its argument. 

\begin{theorem}\label{free-transitive-action}
Fix $(g,\phi)\in G\times End(G)$ and let $n,r$ be the index and period corresponding to $g,\phi$. Moreover, let $\mathcal{C}$ be the corresponding cycle of size $r$. The abelian group $\ZZ_r$ acts freely and transitively on $\mathcal{C}$.
\end{theorem}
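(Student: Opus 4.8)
The plan is to define the action of $\ZZ_r$ on $\mathcal{C}$ using the function $f$ from Definition~\ref{ast-function}, and then verify the three things required: that it is well-defined as a group action, that it is free, and that it is transitive. Concretely, for a residue class $\modd{i}\in\ZZ_r$ and an element $s(g,\phi,n+j)\in\mathcal{C}$ (with $0\le j<r$), I would set
\[\modd{i}\oast s(g,\phi,n+j) := i\ast s(g,\phi,n+j) = s(g,\phi,n+i+j),\]
the last equality by Lemma~\ref{splitting-property}. The first order of business is well-definedness: I must check that the output lands in $\mathcal{C}$ (reduce $n+i+j$ using Lemma~\ref{r-identity} to land in the range $\{s(g,\phi,n),\dots,s(g,\phi,n+r-1)\}$) and, crucially, that the output does not depend on the chosen representative $i$ of $\modd{i}$ — this is exactly Theorem~\ref{exponent-congruence}, which says $s(g,\phi,n+i+j)$ depends only on $(i+j)\bmod r$, hence only on $i\bmod r$ once $j$ is fixed.

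Next I would check the two group-action axioms of Definition~\ref{defn-group-actions}. The identity axiom $\modd{0}\oast s(g,\phi,n+j)=s(g,\phi,n+j)$ is immediate since $s(g,\phi,n+0+j)=s(g,\phi,n+j)$. For compatibility, $(\modd{i}+\modd{k})\oast x = \modd{i}\oast(\modd{k}\oast x)$: unwinding both sides via Lemma~\ref{splitting-property} reduces to $s(g,\phi,n+(i+k)+j)$ on the left and $s(g,\phi,n+i+(k+j))$ on the right, which agree by associativity of integer addition (again invoking Theorem~\ref{exponent-congruence} to pass freely between $(i+k)\bmod r$ and the unreduced sum). For transitivity: given $s(g,\phi,n+j)$ and $s(g,\phi,n+j')$ in $\mathcal{C}$, the element $\modd{j'-j}\in\ZZ_r$ sends the former to the latter. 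For freeness: if $\modd{i}$ fixes $s(g,\phi,n+j)$, then $s(g,\phi,n+i+j)=s(g,\phi,n+j)$, so by Theorem~\ref{exponent-congruence} $i+j\equiv j\pmod r$, i.e. $i\equiv 0\pmod r$, so $\modd{i}=\modd{0}$ — and since $\ZZ_r$ acts transitively, freeness at one point gives freeness everywhere. The abelianness of $\ZZ_r$ is standard, so the action is a commutative group action in the sense of Definition~\ref{defn-group-actions}.

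The main obstacle here is not any single computation but making sure the bookkeeping between \emph{integers} $n+i+j$ and \emph{residue classes} is airtight: the subtle point is that $f$ in Definition~\ref{ast-function} takes a natural-number first argument, whereas the group $\ZZ_r$ has residue-class elements, so every step where I descend from $\NN$ to $\ZZ_r$ must be justified by Theorem~\ref{exponent-congruence}. Everything else — the axioms, transitivity, freeness — then follows from Lemma~\ref{splitting-property} and elementary properties of modular arithmetic. I would present the argument by first establishing well-definedness as its own paragraph, then dispatching the two axioms, then transitivity, then deriving freeness from the fixed-point computation together with transitivity.
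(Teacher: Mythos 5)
Your proposal is correct and follows essentially the same route as the paper's proof: define $\oast$ on $\mathcal{C}$ via the $\ast$ operator of Definition~\ref{ast-function}, use Theorem~\ref{exponent-congruence} to pass between integer arguments and residue classes, and then verify the axioms, freeness, and transitivity exactly as you describe. The only cosmetic difference is that the paper sidesteps the representative-independence check by defining the action through the canonical representative $j\bmod r$, whereas you verify well-definedness directly; both amount to the same appeal to Theorem~\ref{exponent-congruence}.
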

\begin{proof}
First note that Theorem~\ref{exponent-congruence} immediately gives that $j\ast s(g,\phi,i+n)=s(g,\phi,(i+j)\mod r+n)$ for any $j\in\mathbb{N}$. Our current definition of $s$ is not defined for negative integer arguments; nevertheless, we can extend the range of the operator $\ast$ as follows. Let $\ast:\mathbb{Z}\times \mathcal{C}\to \mathcal{C}$ be defined by $$j\ast s(g,\phi,i)=\phi^{j\mod r}(s(g,\phi,i+n))\cdot s(g,\phi,j\mod r)$$ Since $j\mod r\geq0$, as usual we have $j\ast s(g,\phi,i+n)=s(g,\phi,i+j\mod r+n)$; but since $s(g,\phi,i+n)\in\mathcal{C}$, we know $0\leq i < r$, so $i\mod r=i$. It follows that $j\ast s(g,\phi,i+n)=s(g,\phi,(i+j)\mod r+n)$.

In fact, fix some $i\in\mathbb{N}$, and let $\modd{j}$ be a fixed element of $\ZZr$. By definition, every $k\in\modd{j}$ is such that $k\mod r=j'$ for some $j'\in\{0,...,r-1\}$; without loss of generality, $j'=j$. We may therefore define $\oast:\ZZ\times\mathcal{C}\to\mathcal{C}$ by
\[\modd{j}\oast s(g,\phi,i+n)= s(g,\phi,(i+j)\mod r+n)\]
where $j$ is the unique element of $\modd{j}$ such that $k\mod r=j$ for each $k\in\modd{j}$. We claim that $(\ZZr, \mathcal{C},\oast)$ is a free, transitive group action.

First, let us verify that a group action is indeed defined. Certainly $\modd{0}$ fixes every element in $\mathcal{C}$, since $s(g,\phi,(i+0)\mod r+n)=s(g,\phi,i+n)$ for each $i\in\{0,...,r-1\}$. Moreover, one has
\begin{align*}
    \modd{k}\oast(\modd{j}\oast s(g,\phi,i+n)) &= \modd{k}\oast s(g,\phi,(i+j)\mod r+n) \\
    &= s(g,\phi,((i+j)\mod r)+k\mod r+n) \\
    &= s(g,\phi,(i+(j+k))\mod r+n) \\
    &= \modd{j+k}\oast s(g,\phi,i+n)\\
    &=(\modd{k}+\modd{j})\oast s(g,\phi,i+n)
\end{align*}

It remains to check that the action is free and transitive. If $\modd{j}\in\ZZr$ is such that $\modd{j}$ fixes an arbitrary element of $\mathcal{C}$, say $s(g,\phi,i+n)$, then we have $s(g,\phi,(i+j)\mod r+n)=s(g,\phi,i+n)$. By Theorem~\ref{exponent-congruence}, we must have $i+j\equiv i\mod r$, so $\modd{j}=\modd{0}$ and the action is free. Moreover, for arbitrary $s(g,\phi,i+n),s(g,\phi,j+n)\in\mathcal{C}$, $\modd{k}=\modd{j-i}\in\ZZr$ is such that $\modd{k}\oast s(g,\phi,i+n)=s(g,\phi,j+n)$, so the action is also transitive and we are done.

\qed 
\end{proof}

We summarise the above by noting that for each $(g,\phi)\in G\times End(G)$ we have shown the existence of a free, transitive, commutative group action $(\mathbb{Z}_r,\mathcal{C},\oast)$, where $r$ and $\mathcal{C}$ depend on the choice of pair $(g,\phi)$.

\section{Group Action Discrete Logarithms}\label{group-action-discrete-log}
Now that we have established the group action, we recall the Group Action Discrete Logarithm Problem (\gadlp{}) from the introduction. Roughly speaking, for a free transitive group action $(G,X,\star)$, and $x,y$ sampled uniformly at random from the set $X$, we are tasked with recovering the unique $G$-element $g$ such that $g\star x=y$. In this section we will show that one can construct a quantum reduction from $\sdlp{}$ to $\gadlp{}$. 

More precisely, we target the type of structure discussed in Section~\ref{prelims-efficiency}; that, is a set of finite semigroups $\{G_p\}_p$ indexed by some parameter $p$, such that the size of each $G_p$ is polynomial in $p$ - the so-called `easy' families of semigroups. We know that multiplication in each $G_p$ requires a number of operations bounded above by some constant independent of $p$, and that the complexity of these operations is bounded above by $\bigO{(\log p)^2}$ 

With all this in mind let $\{G_p\}_p$ be such a family of semigroups. In the previous section we have shown that for a fixed $p$, to each pair $(g,\phi)\in G_p\times End(G_p)$ is associated a pair $(n,r)$ and a set $\mathcal{C}$. In this section we seek to show there is an efficient quantum algorithm to solve \sdlp{} with respect to an arbitrary choice of $(g,\phi)$, provided one has access to a \gadlp{} oracle for the group action $(\ZZr, \mathcal{C}, \oast)$. 

Before giving this reduction there remains a significant obstacle to overcome: for an arbitrary pair $(g,\phi)$ we have only proved the existence of the corresponding values $n,r$, but we do not have a means of calculating them. In order to provide a reduction to a \gadlp{} oracle, however, we need to specify the appropriate group action. We therefore require access to the values $n,r$ - in the next section, we will provide a quantum method of recovering these integers. We note that assuming access to a quantum computer is, for our purposes, justified since the best-known algorithms for \gadlp{} are quantum anyway.

\subsection{Calculating the Index and Period}\label{non-empty-tails}

In order to reason on the complexity of our algorithm we will use the following worst-case indicator, defined as follows:
\begin{definition}\label{extremal-functions}
    Let $\{G_p\}_{p\in P}$ be an easy family of finite semigroups parameterised by some set $P$. Define the following function on $P$:
    \[N(p)=\max_{(g,\phi)\in G_p\times End(G_p) }|\mathcal{T}_{g,\phi}+\mathcal{C}_{g,\phi}|\]
\end{definition}

The function $N(p)$ gives a bound on the size of $\mathcal{X}_{g,\phi}$ for any $(g,\phi)\in G_p\times End(G_p)$. Since a crude such bound is the size of an easy semigroup $G_p$, which is assumed polynomial in $p$, we have that $N(p)$ is at worst polynomial in $p$.

Our method of calculating the index and period borrows heavily from ideas in \cite[Theorem~1]{childs2014quantum}, which is itself a slightly repurposed version of \cite[Algorithm~5]{childs2010quantum}. Indeed, after a certain point we will be able to quote methods of these algorithms verbatim - nevertheless, to cater to our specific context it remains incumbent upon us to justify the following.

\begin{lemma}\label{exp-calc}
    Let $\{G_p\}_p$ be an easy family of semigroups, and for an arbitrary $p$ fix a pair $(g,\phi)\in G_p\times End(G_p)$. For any $l\in\mathbb{N}$, one can construct the superposition
    \[\frac{1}{\sqrt{M}}\sum_{k=0}^{M-1}\ket{k}\ket{s(g,\phi,k)}\]
    in time $\bigO{(\log M) (\log p)^2}$, where $M=2^l$.
\end{lemma}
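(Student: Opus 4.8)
The plan is to build the superposition $\frac{1}{\sqrt{M}}\sum_{k=0}^{M-1}\ket{k}\ket{s(g,\phi,k)}$ by a standard ``superpose-then-evaluate'' strategy: first create a uniform superposition over the index register with a Hadamard transform, then implement the map $\ket{k}\ket{0}\mapsto\ket{k}\ket{s(g,\phi,k)}$ as a reversible unitary and apply it. The content of the lemma is therefore almost entirely a bookkeeping exercise showing that the evaluation map can be done as a quantum circuit in time $\bigO{\log M (\log p)^2}$; the quantum part is entirely off-the-shelf.

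\medskip

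Concretely, I would proceed as follows. First, fix $l$ and $M=2^l$, and take an $l$-qubit index register together with a second register large enough to hold an element of $G_p$; since $|G_p|$ is polynomial in $p$, an element of $G_p$ is representable in $\bigO{\log p}$ qubits, and we can fix such an encoding. Second, apply $H_{2^l}$ to $\ket{0}$ in the index register to obtain $\frac{1}{\sqrt{M}}\sum_{k=0}^{M-1}\ket{k}$, which by the preliminaries costs time $\bigO{l}=\bigO{\log M}$. Third, and this is the crux, I would exhibit a reversible classical circuit computing $k\mapsto s(g,\phi,k)$ and bound its size. Here I would invoke the ``square-and-multiply''-style observation already used in the remark after Definition~\ref{ast-function}: because $G_p$ is an easy family, both a single multiplication in $G_p$ and a single evaluation of $\phi$ cost $\bigO{(\log p)^2}$, and to compute $s(g,\phi,k)=\phi^{k-1}(g)\cdots\phi(g)\cdot g$ for $k<M$ it suffices to perform $\bigO{\log M}$ such operations: one can accumulate the product by processing the binary digits of $k$, using Lemma~\ref{splitting-property} to combine partial products of the form $s(g,\phi,a)$ and $s(g,\phi,b)$ into $s(g,\phi,a+b)$ via $\phi^a(s(g,\phi,b))\cdot s(g,\phi,a)$, and maintaining alongside each partial product the corresponding power $\phi^{a}$ (also updated by $\bigO{\log M}$ compositions, each costing $\bigO{(\log p)^2}$). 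This gives a circuit of size $\bigO{\log M (\log p)^2}$. Fourth, make this circuit reversible in the standard way (Bennett-style, with only constant-factor overhead) so that it implements a unitary $\ket{k}\ket{0}\mapsto\ket{k}\ket{s(g,\phi,k)}$ possibly with some ancilla that is returned to $\ket{0}$; applying it to the state from the previous step yields exactly $\frac{1}{\sqrt{M}}\sum_{k=0}^{M-1}\ket{k}\ket{s(g,\phi,k)}$. Summing the costs, the dominant term is $\bigO{\log M (\log p)^2}$, as claimed.

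\medskip

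The only genuine obstacle is the third step: carefully arguing that $s(g,\phi,k)$ can be evaluated with $\bigO{\log M}$ semigroup operations rather than $\bigO{M}$ of them. This is where Lemma~\ref{splitting-property} does the real work, since it is the analogue of the identity that makes fast exponentiation possible; the subtlety is that one must carry the endomorphism power $\phi^{a}$ along with the partial product $s(g,\phi,a)$, because combining two partial products requires applying $\phi^{a}$ to one of them. Once one is convinced that both the product and the needed power can be maintained under $\bigO{\log M}$ doublings-and-adds, each step of which involves $\bigO{1}$ multiplications in $G_p$ and $\bigO{1}$ evaluations of $\phi$ (each $\bigO{(\log p)^2}$ by easiness), the complexity bound is immediate and the rest is routine quantum circuit plumbing: uniform superposition via Hadamard, reversibilisation of a classical circuit, and addition of running times.
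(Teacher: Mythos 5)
Your proposal is correct and follows essentially the same route as the paper: Hadamard to create the uniform superposition, then a reversible circuit for $k\mapsto s(g,\phi,k)$ whose size is bounded by a square-and-multiply evaluation costing $\bigO{\log M}$ semigroup/endomorphism operations, each $\bigO{(\log p)^2}$ by easiness. The only cosmetic difference is that the paper packages your ``carry $\phi^a$ alongside the partial product and combine via Lemma~\ref{splitting-property}'' step as square-and-multiply exponentiation in the holomorph, i.e.\ computing $(g,\phi)^k=(s(g,\phi,k),\phi^k)$, which is the identical computation.
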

\begin{proof}
    When $(g,\phi)$ is fixed, notice that we can think of $s(g,\phi,i):G\times End(G)\times\NN\to G$ as a function $s_{g,\phi}(i):\NN\to G$. Since $N(p)$ is a bound on the size of $\mathcal{X}_{g,\phi}$, taking $m$ to be smallest integer such that $2^m\geq N(p)$ (note that $m=\bigO{\log(N(p))}$), the set $\mathcal{X}_{g,\phi}$ has binary representation in the set $\{0,1\}^m$. By definition the integers $\{0,...,M-1\}$ have binary representation in $\{0,1\}^l$, so we can think of the restriction of $s_{g,\phi}$ on $\{0,...,M-1\}$ as a function from $\{0,1\}^{l}$ into $\{0,1\}^m$. There is therefore a Boolean circuit computing $s_{g,\phi}$; it is standard (say, by \cite[Theorem~2.3.2]{hirvensalo2003quantum}) that one can construct a circuit implementing $s_{g,\phi}$ using reversible gates. Call this circuit $Q_{s_{g,\phi}}$; since the reversible circuit does no worse than the classical circuit, we can assume a single application of $Q_{s_{g,\phi}}$ takes at worst the time complexity of calculating $s_{g,\phi}(M)$. 

    What is the time complexity of calculating $s_{g,\phi}(M)$? We know by definition that $(s_{g,\phi}(M),\phi^M)=(g,\phi)^M$, where the exponentiation refers to holomorph exponentiation. Recall that since we have assumed we are working in an easy family of semigroups, each multiplication in the holomorph involves one application of $\phi$, followed by some fixed constant number of matrix multiplications independently of $p$. Calculating the holomorph exponentiation in the standard square-and-multiply fashion, therefore, we expect to perform $\bigO{\log M}$ holomorph multiplications. We know that evaluating $\phi$ takes time $\bigO{(\log p)^2}$; since a fixed number of matrix multiplications follow, the total time for calculating $s(g,\phi,M)$ is $\bigO{\log M(\log p)^2}$.
     
    If we can show a single application of $Q_{s_{g,\phi}}$ gives the desired superposition we are done. It is standard, however, that the uniform superposition of an $M$-bit quantum register, together with an ancillary $m$-bit register in the state $\ket{0}$, can be inputted into $Q_{s_{g,\phi}}$ to produce the desired superposition. Since preparing the appropriate uniform superposition can be done by applying a Hadamard gate in time $\bigO{\log M}$, we are done.
    \qed
\end{proof}

Armed with the ability to efficiently calculate the appropriate superposition, we will quickly find ourselves with exactly the kind of state arrived at in \cite[Algorithm~5]{childs2010quantum}, thereby allowing us to recover the period $r$ in Algorithm~\ref{period-recovery}. A small adaptation of standard binary search techniques completes the task by using knowledge of $r$ to recover the index $n$.

\begin{algorithm}[h!]
    \caption{PeriodRecovery($((g,\phi),M)$)}\label{period-recovery}
    {\bf Input}: Pair $(g,\phi)\in G_p\times End(G_p)$, upper bound on size of superposition to create $M$ 
    \par \noindent
    {\bf Output}: Period $r$ of $(g,\phi)$ or 'Fail'
    \par \medskip \noindent
    \begin{algorithmic}[1]
        \State $R_0\gets\ket{0}\ket{0}$\label{init} 
        \State $R_1\gets$ Hadamard transform applied to first register
        \State $R_2\gets$ appropriate quantum circuit applied to $R_1$\label{comp-sup}
        \State Measure second register leaving collapsed first register $R_3$\label{obs:R}
        \State $R_4\gets$ QFT over $\mathbb{\ZZ_M}$ applied to $R_3$\label{qft}
        \State $R_5\gets$ measure $R_4$\label{obs:R4}
        \State $r\gets$ continued fraction expansion of $R_5/M$\label{pp} 
        \If{$r\ast s(g,\phi,M)\neq s(g,\phi,M)$}
            \State \Return `Fail'
        \Else
            \State \Return $r$
        \EndIf
    \end{algorithmic}
\end{algorithm}

\begin{algorithm}[h!]
    \caption{BinarySearch($(g,\phi),start,end,r$)}\label{index-recovery}
    {\bf Input}: Pair $(g,\phi)$, integers $start,end$ where $start\leq end$, period $r$ of $g,\phi$
    \par \noindent
    {\bf Output}: Index $n$ of $(g,\phi)$
    \par \medskip \noindent
    \begin{algorithmic}[1]
        \If{$start=end$}:
            \State \Return $start$
        \EndIf
        \State $left\gets start$
        \State $right\gets end$
        \State $mid\gets\lfloor(left+right)/2\rfloor$
        \If{$r\ast s(g,\phi,mid)\neq s(g,\phi,mid)$}
            \State \Return BinarySearch($(g,\phi),mid+1,right,r$)
        \Else
            \State \Return BinarySearch($(g,\phi),left,mid,r$)
        \EndIf
    \end{algorithmic}
\end{algorithm}

\begin{theorem}\label{index-period-recovery}
    Let $\{G_p\}_p$ be an easy family of semigroups, and fix $p$. For any pair $(g,\phi)\in G_p\times End(G_p)$:
    \begin{enumerate}
        \item For sufficiently large $M\in\NN$, PeriodRecovery($(g,\phi),M$) recovers the period $r$ of $(g,\phi)$ in time $\bigO{(\log p)^3}$, and with constant probability.
        \item BinarySearch($(g,\phi),1,M,r$) returns the index $n$ of $g,\phi$ in time $\bigO{(\log p)^4}$.
    \end{enumerate}   
\end{theorem}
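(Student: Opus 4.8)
The plan is to establish the two clauses separately: clause~(1) by casting period recovery as quantum period-finding for an \emph{eventually periodic} function and invoking the analysis of \cite{childs2010quantum,childs2014quantum}, and clause~(2) by isolating a monotone predicate that turns index recovery into a classical binary search.

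For clause~(1), fix $(g,\phi)$ and view $s_{g,\phi}\colon k\mapsto s(g,\phi,k)$ as in Lemma~\ref{exp-calc}. By Theorem~\ref{exponent-structure} and Lemma~\ref{r-identity}, this map is injective on the tail indices and satisfies $s(g,\phi,m+r)=s(g,\phi,m)$ for all $m\geq n$ with $r$ minimal; that is, $s_{g,\phi}$ is a length-$n$ tail followed by a pure period $r$, which is exactly the input shape handled by the period-finding routine of \cite[Algorithm~5]{childs2010quantum}, reused in \cite[Theorem~1]{childs2014quantum}. Concretely: lines~\ref{init}--\ref{comp-sup} of Algorithm~\ref{period-recovery} build $\frac1{\sqrt M}\sum_{k=0}^{M-1}\ket{k}\ket{s(g,\phi,k)}$ at cost $\bigO{\log M(\log p)^2}$ by Lemma~\ref{exp-calc}; measuring the value register (line~\ref{obs:R}) collapses the index register to the uniform superposition over an arithmetic progression of common difference $r$, except with probability $\bigO{n/M}$ when the measured value lies in the tail; the QFT over $\ZZ_M$ and its measurement (lines~\ref{qft}--\ref{obs:R4}) produce, with constant probability, an integer $R_5$ within $1/2$ of a multiple $cM/r$; and the continued-fraction post-processing of line~\ref{pp} recovers $r$ with constant probability, all precisely as in the cited analyses. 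The context-specific points that remain are: (i) $M$ may be taken polynomial in $p$, since $n$ and $r$ are at most $|\mathcal{X}_{g,\phi}|\leq N(p)$ which is polynomial in $p$, so choosing $M$ to be the least power of two exceeding a fixed polynomial bound on $N(p)^2$ (e.g.\ $|G_p|^2$) makes $\bigO{n/M}$ negligible, ensures $M>r^2$, and keeps $\log M=\bigO{\log p}$; (ii) the verification step is sound, because any candidate $d<r$ satisfies $s(g,\phi,n+d)\neq s(g,\phi,n)$ by minimality of $r$, hence $d\ast s(g,\phi,M)=s(g,\phi,M+d)\neq s(g,\phi,M)$ by Lemma~\ref{splitting-property} and $M\geq n$, so the test $r\ast s(g,\phi,M)\neq s(g,\phi,M)$ rejects every wrong candidate while passing the true $r$; and (iii) the time bound holds, since each of the Hadamard transform, the QFT over $\ZZ_M$, the continued-fraction expansion, and the two final evaluations $s(g,\phi,M)$ and $r\ast s(g,\phi,M)$ costs $\operatorname{poly}(\log M)\cdot\bigO{(\log p)^2}=\bigO{(\log p)^3}$, and building the superposition is likewise $\bigO{(\log p)^3}$.

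For clause~(2), the key observation is that the predicate $P(m):=\big[\,r\ast s(g,\phi,m)=s(g,\phi,m)\,\big]$ is monotone in $m$: for $m\geq n$, Lemma~\ref{r-identity} gives $r\ast s(g,\phi,m)=s(g,\phi,m+r)=s(g,\phi,m)$, so $P(m)$ holds; and for $1\leq m<n$, if $P(m)$ held then $s(g,\phi,m)=s(g,\phi,m+r)$ with $m+r\neq m$, contradicting the minimality of $n$. Hence $n$ is exactly the threshold at which $P$ switches from false to true, and $1\leq n\leq N(p)\leq M$, so $n\in[1,M]$. Algorithm~\ref{index-recovery} is then the textbook binary search for this threshold: the invariant $n\in[start,end]$ is preserved by every recursive call, so the value returned when $start=end$ is $n$. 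There are $\bigO{\log M}=\bigO{\log p}$ recursive calls, each evaluating $s(g,\phi,mid)$ and $r\ast s(g,\phi,mid)$ at cost $\bigO{\log M(\log p)^2}=\bigO{(\log p)^3}$, giving total running time $\bigO{(\log p)^4}$.

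The main obstacle is the period-finding analysis underlying clause~(1): because $s_{g,\phi}$ is only \emph{eventually} periodic, the post-measurement state is a truncated arithmetic progression rather than a clean periodic superposition over $\ZZ_M$, so before the continued-fraction step can be guaranteed to succeed one must separately bound the spurious contribution of the tail values and control the approximation error in the QFT peaks, choosing $M$ large enough to suppress both. This is exactly the regime treated in \cite{childs2010quantum,childs2014quantum}, so the real work is verifying that our $s_{g,\phi}$ satisfies their hypotheses (via Theorem~\ref{exponent-structure} and Lemma~\ref{exp-calc}) and that $M$ stays polynomial in $p$; clause~(2) is then elementary.
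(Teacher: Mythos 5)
Your proposal is correct and follows essentially the same route as the paper: build the superposition via Lemma~\ref{exp-calc} with $M$ a power of two polynomially larger than $N(p)$ (the paper takes $2^\ell\geq N(p)^2+N(p)$), measure to collapse onto an arithmetic progression of difference $r$ outside a tail event of probability $n/M$, then QFT and continued fractions exactly as in \cite{childs2010quantum}, followed by binary search on the predicate $r\ast s(g,\phi,m)=s(g,\phi,m)$ whose threshold is $n$. The only differences are presentational — you argue binary-search correctness via the monotone-threshold invariant where the paper uses strong induction on the interval length, and you add a soundness remark for the final verification test that the paper leaves implicit — neither of which changes the substance.
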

\begin{proof}
\begin{enumerate}
    \item Fix a pair $(g,\phi)\in G_p\times End(G_p)$ and let $r$ be its period. Let $\ell\in\mathbb{N}$ be the smallest positive integer such that $2^{\ell}\geq(N(p)^2+N(p))$, and $M=2^\ell$. In steps~\ref{init}-\ref{comp-sup} of Algorithm~\ref{period-recovery}, we prepare the required superposition as described in Lemma~\ref{exp-calc}.
    
    In Step \ref{obs:R}, we measure the second register. With probability $n/M$ doing so will cause us to observe an element of the tail; that is, some $s(g,\phi,i)$ such that $i < n$. In this case, by the laws of partial observation, the first register is left in a superposition of integers corresponding to this value - but by definition there is only one of these, so the first register consists of a single computational basis state and the algorithm has failed. On the other hand, with probability $(M-n)/M$ measuring the second register gives an element of $\mathcal{C}$. Now, since $M\geq N(p)^2+N(p)$, we observe an element of $\mathcal{C}$ with probability
    \[\frac{M-n}{M}=1-\frac{n}{M}\geq 1-\frac{n}{N(p)^2+N(p)}\]
    Since by definition one has $n\leq N(p)$, it follows that the relevant probability is better than $N(p)/(N(p)+1)\geq 1/2$. In other words, we observe an element of the desired form with constant, positive probability.
    Provided such an element was observed, after measuring the second register, the superposition of corresponding integers in the first register is the following:
    \[\frac{1}{\sqrt{s_r}}\sum_{j=0}^{s_r-1}\ket{x_0+jr}\]
    To see this, note that the function $s$ is periodic of period $r$, and by Theorem~\ref{exponent-structure} each $s(g,\phi,i)$ such that $i\geq n$ can only assume one of the distinct values $s(g,\phi,n),...,s(g,\phi,n+r-1)$. In particular, the integers in $\{1,...,M\}$ that give a specific value of the cycle under $s$ are of the form $x_0+jr$ for some $x_0\in\{n,...,n+r-1\}$. The largest such integer, by definition, is $x_0+s_rr$, where $s_r$ is just the largest integer such that $x_0+s_rr<M$. Note that the superposition is normalised by this factor so that the sum of the squares of the amplitudes is $1$.
    
    We now have exactly the same kind of state found in \cite[Algorithm~5]{childs2010quantum}\footnote{This type of state also occurs in Shor's factoring algorithm.}, so we may proceed exactly according to the remaining steps in this algorithm. In Step~\ref{qft} we apply a Quantum Fourier Transform (QFT) over $\mathbb{Z}_M$ to the state, which can be done in time $\mathcal{O}((\log M)^2)$. In step~\ref{obs:R4} we measure the state $R_4$; it is shown in \cite[Algorithm~5, Step~5]{childs2010quantum} that with probability at least $4/\pi^2$, measuring the resulting state leaves one with the closest integer to one of the at most $r$ multiples of $M/r$ (note that $M/r$ is not necessarily an integer) with probability better than $4/\pi^2$. Writing this closest integer as $\lfloor jM/r\rceil$ for some $j\in\mathbb{N}$, one checks that the fraction $j/r$ is a distance of at most $1/2M$ from $(\lfloor jM/r \rceil)/M$; by \cite[Theorem~8.4.3]{hirvensalo2003quantum}, $j/r$ will appear as one of the convergents in the continued fraction expansion of $(\lfloor jM/r \rceil)/M$ provided $1/2M\leq1/2r^2$. Certainly this holds, since $r<N(p)<M$. Provided we have observed an integer of the appropriate form, then, it remains to carry out a continued fraction expansion on $(\lfloor jM/r \rceil)/M$, which we can do with repeated application of the Euclidean algorithm.

    Let us summarise the complexity of the algorithm. The dominating factors are the creating of the relevant superposition in time 
    \[\bigO{(\log M)(\log p)^2}=\bigO{(\log N(p)(\log p)^2=\bigO{(\log p)^3}}\]
    where the last equality follows from the easy property of the relevant semigroup family; that is, one has that $N(p)$ is at worst polynomial in $p$. Similarly, the application of QFT can be done in time $\bigO{(\log p)^2}$, so we have the complexity estimate claimed at the outset. Note also that the algorithm succeeds provided an element of the cycle is observed after the first measurement, and that the second measurement gives an appropriate integer. Since both of these events occur with probability bounded below by a constant, the algorithm succeeds with probability $\bigOmega{1}$.

    \item We prove correctness of the algorithm by proving that any values $start,end$ such that $start\leq n\leq end$ will return $n$, which we accomplish by strong induction on $k=start-end+1$. To save on cumbersome notation we assume $(g,\phi)$ and $r$ are fixed, and write
    \[BinarySearch((g,\phi,),start,end,r)=BS(start,end)\]

    First, suppose $k=1$ and $start\leq n\leq end$. Either $n=start$ or $n=start+1$, and we know that $mid=start$ after the floor function is applied. In the first case, $r\ast s(g,\phi,mid)=s(g,\phi,mid)$, so BS($start,mid$) is returned; but since $start=mid$, $start=n$ is returned. Otherwise, one has $r\ast s(g,\phi,mid)\neq s(g,\phi,mid)$ and BS($mid+1,end$) is returned, and we are done since $mid+1=end=n$.

    Now for some $k>1$ suppose all positive integers $start', end'$ such that $start'\leq n\leq end'$ and $end'-start'+1<k$ have BS($start',end'$)=$n$. We should like to show that an arbitrary choice of $start,end$ with $start\leq n\leq end$ and $end-start+1=k$ enjoys this same property. To see that it does we can again consider the two cases. 

    The algorithm first calculates $mid=\lfloor(end-start)/2\rfloor$. Suppose $r\ast s(g,\phi,mid)=s(g,\phi,mid)$, then BS($start,mid$) is run. Since $n$ is the smallest integer such that $r\ast s(g,\phi,n)=s(g,\phi,n)$ and $n\geq start$ by assumption, we know $start\leq n\leq mid$. Moreover, $mid-start+1<end-start+1<k$. By inductive hypothesis BS($start,mid$) returns $n$.

    The other case is similar; this time, if $r\ast s(g,\phi,mid)\neq s(g,\phi,mid)$ we know $n\geq mid+1$ by definition of $n$. We also know that $end-(mid+1)+1=end-mid<end-mid+1=k$, so the algorithm returns BS($mid+1,end$)=$n$ by inductive hypothesis.

    Notice that each time $BinarySearch$ is called the calculation of $r\ast s(g,\phi,mid)$ is required. We know already that $s(g,\phi,mid)$ can be calculated in time $\bigO{\log mid(\log p)^2}=\bigO{(\log p)^3}$. Given $\phi^{r}$, $s(g,\phi,r_{g,\phi)}$ and $s(g,\phi,mid)$, the calculation of $r\ast s(g,\phi,mid)$ requires evaluating an endomorphism and a semigroup multiplication - we have argued already that this can be done in time $\bigO{(\log p)^2}$. Recall that in the proof of Lemma~\ref{exp-calc}, we showed that one can calculate the holomorph exponent $(g,\phi)^{r}=(s(g,\phi,r),\phi^{r})$ in time $\bigO{\log r(\log p)^2}$, so the total calculation is done in time $\bigO{(\log p)^3}$ since $r<M$. Clearly, BinarySearch will be called $\bigO{\log M}=\bigO{\log p}$ times, since the size of the interval to search halves at each iteration, and we conclude that $BinarySearch$ recovers the index in time $\bigO{(\log p)^4}$. 
\end{enumerate}
\qed
\end{proof}

\subsection{From \sdlp{} to \gadlp{}}
Let us assemble the components developed so far in this section into a reduction of $\sdlp{}$ to $\gadlp{}$.

\begin{theorem}\label{sdlp-to-gadlp}
Let $\{G_p\}_p$ be an easy family of semigroups, and fix $p$. Algorithm~\ref{sd-ga} solves \sdlp{} with respect to a pair $(g,\phi)\in G_p\times End(G_p)$ given access to a \gadlp{} oracle for the group action $(\ZZ_{r}, \mathcal{C}, \oast)$. The algorithm runs in time $\bigO{(\log p)^4}$, makes at most a single query to the \gadlp{} oracle, and succeeds with probability $\bigOmega{1}$.  
\end{theorem}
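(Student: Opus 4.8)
The plan is to realise Algorithm~\ref{sd-ga} as four successive stages, only the first of which is probabilistic. Write the \sdlp{} instance as $\big((g,\phi),t\big)$ with $t=s(g,\phi,x)$ and $x$ chosen uniformly from $\{1,\dots,N\}$; by Theorem~\ref{exponent-structure} we have $N=n+r-1$ and the map $i\mapsto s(g,\phi,i)$ is a bijection from $\{1,\dots,N\}$ onto $\mathcal{X}_{g,\phi}$, so recovering $x$ is the same as recovering the unique preimage of $t$ under this map. The first stage recovers the index and period: I would run PeriodRecovery (Algorithm~\ref{period-recovery}) with $M=2^{\ell}$ for $\ell$ the least integer with $2^{\ell}\ge N(p)^2+N(p)$ to obtain $r$, then BinarySearch (Algorithm~\ref{index-recovery}) with arguments $(g,\phi),1,M,r$ to obtain $n$. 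By Theorem~\ref{index-period-recovery} this costs $\bigO{(\log p)^4}$ and succeeds with probability $\bigOmega{1}$; as the remaining stages will be deterministic, both the running time and the success probability of the whole algorithm are inherited from here, provided each later stage runs in deterministic time $\bigO{(\log p)^4}$.

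The second stage decides whether $t\in\mathcal{T}_{g,\phi}$ (equivalently $x<n$) or $t\in\mathcal{C}_{g,\phi}$ (equivalently $x\ge n$): compute $r\ast t=\phi^{r}(t)\cdot s(g,\phi,r)$ using the holomorph exponent $(g,\phi)^{r}$, and test whether $r\ast t=t$. If $t=s(g,\phi,i)$ with $i\ge n$ then $r\ast t=s(g,\phi,i+r)=t$ by Lemma~\ref{r-identity}; if $i<n$ then $s(g,\phi,i+r)\ne s(g,\phi,i)$ by Theorem~\ref{exponent-structure}, since it is either a distinct tail value or a cycle value, and a cycle value never coincides with a tail value. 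Hence this is exactly the membership test we want, and by the argument of Lemma~\ref{exp-calc} it costs $\bigO{(\log p)^3}$. The third stage handles $t\in\mathcal{C}_{g,\phi}$, in which case $x$ is the unique integer in $\{n,\dots,n+r-1\}$ with $s(g,\phi,x)=t$ and, by the action defined in Theorem~\ref{free-transitive-action}, $\modd{x-n}\oast s(g,\phi,n)=t$. I would therefore issue the single \gadlp{} query, asking the oracle for $(\ZZr,\mathcal{C}_{g,\phi},\oast)$ to solve the instance with base point $s(g,\phi,n)$ and target $t$ — legitimate because, as noted after the definitions of \gadlp{} and \gacdh{}, fixing a set element still yields a well-posed \gadlp{} problem — obtaining the residue $d=(x-n)\bmod r=x-n$ and returning $x=n+d$. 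Computing $s(g,\phi,n)$ costs $\bigO{(\log p)^3}$, so this stage stays within budget and makes exactly one query.

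The fourth stage handles $t\in\mathcal{T}_{g,\phi}$ and uses no oracle call whatsoever. The key observation is that $k\ast t=s(g,\phi,x+k)$ by Lemma~\ref{splitting-property}, so the predicate $P(k):=\big[\,r\ast(k\ast t)=k\ast t\,\big]$ — which, exactly as in the second stage, tests whether $k\ast t\in\mathcal{C}_{g,\phi}$ — is false for $k<n-x$ and true for $k\ge n-x$. A binary search over $k\in\{0,\dots,n\}$ therefore locates the threshold $k_{0}=n-x$, and I would return $x=n-k_{0}$. Each step of the search evaluates one value $k\ast t$ (computable in time $\bigO{(\log p)^3}$ via square-and-multiply, as in Lemma~\ref{exp-calc}) and one membership test, over $\bigO{\log n}=\bigO{\log p}$ steps, for $\bigO{(\log p)^4}$ in total.

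Collating the stages: each runs in time $\bigO{(\log p)^4}$, at most one \gadlp{} query is made (only in the third stage), and the only source of failure is PeriodRecovery, so the algorithm succeeds with probability $\bigOmega{1}$. The main obstacle is the fourth stage: a tail element cannot be fed to the \gadlp{} oracle since it does not lie in $\mathcal{C}_{g,\phi}$, and a naive linear scan of the tail would cost $\mathrm{poly}(p)$ rather than $\mathrm{polylog}(p)$, breaking the time bound; the device of translating $t$ forward step by step with $\ast$ and binary-searching for the moment it enters the cycle is what keeps this branch polylogarithmic. A secondary point to handle carefully is that the supplied \gadlp{} oracle, stated for a uniformly random set element, is invoked on the \emph{fixed} base point $s(g,\phi,n)$; this is resolved exactly as in the remark mentioned above.
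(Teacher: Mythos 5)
Your proposal is correct and follows essentially the same route as the paper's proof: period and index recovery via Theorem~\ref{index-period-recovery}, the membership test $r\ast t=t$ to distinguish tail from cycle, a single \gadlp{} query on the base point $s(g,\phi,n)$ in the cycle case, and a binary search on the shifted predicate in the tail case (which is exactly the paper's BinarySearch2). The complexity and success-probability accounting also matches the paper's.
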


\begin{algorithm}
    \caption{Solving \sdlp{} with \gadlp{} oracle}\label{sd-ga}
    {\bf Input}: $(g,\phi), s(g,\phi,x)$
    \par \noindent
    {\bf Output}: $x$
    \par \medskip \noindent
    \begin{algorithmic}[1]
        \State $r\gets$PeriodRecovery$((g,\phi), M)$ for sufficiently large $M$
        \If{$r$='Fail'}
        \State\Return `Fail'
        \EndIf
        \State $n\gets$BinarySearch($(g,\phi),1,M,r$)
        \If{$r\ast s(g,\phi,x)=s(g,\phi,x)$}\label{check}
            \State $d\gets s(g,\phi,n)$\label{cycle-point}
            \State $x'\gets$ \gadlp{} oracle applied to $d,s(g,\phi,x)$\label{oracle-query}
            \State $x\gets n+x'$\label{x-recovered-1}
        \Else
            \State $t\gets$BinarySearch2($s(g,\phi,x),1,n,r$)
            \State $x\gets n-t$
        \EndIf
        \State \Return $x$
    \end{algorithmic}
\end{algorithm}
\begin{proof}
Consider an instance of \sdlp{} whereby we are given the pair $(g,\phi)$ and the value $s(g,\phi,x)$, for some $x$ sampled uniformly at random from the set $\{1,..,n+r-1\}$. We show that Algorithm~\ref{sd-ga} recovers $x$.

We start by applying Algorithms 1 and 2 to the pair $(g,\phi)$, recovering the pair $n,r$ with constant probability. By Theorem~\ref{index-period-recovery}, we can do so in time $\bigO{(\log p)^4}$. Now, $s(g,\phi,x)$ might be in tail or in the cycle - but with our knowledge of $r$ we can check in Step~\ref{check} which is true by verifying whether $r\ast s(g,\phi,x)=s(g,\phi,x)$. As discussed in the proof of Theorem~\ref{index-period-recovery}, we can perform this check in time $\bigO{(\log p)^3}$. 

There are now two cases to consider. First, suppose that the check in Step~\ref{check} is passed, then $s(g,\phi,x)$ is in the cycle, and we may proceed as follows. Compute $s(g,\phi,n)$ in time $\bigO{(\log p)^3}$, and query the \gadlp{} oracle on input $s(g,\phi,n),s(g,\phi,x)$ (Step~\ref{oracle-query}) to recover the $\ZZr$ element $\modd{y}$. Without loss of generality the smallest positive representative of this class, say $x'$, is such that $n+x'=x$, so we recover $x$ in Step~\ref{x-recovered-1}.

Now suppose that $s(g,\phi,x)$ is in the tail. We run the algorithm BinarySearch2 to recover $t$, the smallest integer such that $t\ast s(g,\phi,x)$ is invariant under $r$. BinarySearch2 is precisely the same as Algorithm~\ref{index-recovery}, except that in the verification step, we check if $r\ast(mid\ast s(g,\phi,x))=mid\ast s(g,\phi,x)$. It is not hard to adapt the proof of correctness to show that BinarySearch2 does indeed return $t$ in time $\bigO{(\log p)^4}$. Moreover, by minimality of $n$ and the additivity of $\ast$, we must have $x+t=n$, from which we recover $x=n-t$.

Finally, we note that the only probablistic step of this algorithm is the application of Algorithm~\ref{period-recovery}, so we successfully recover $x$ with the same success probability as Algorithm~\ref{period-recovery}, and we are done.
\qed 
\end{proof}

In summary, we have an efficient quantum reduction from \sdlp{} to \gadlp{}: an efficient quantum procedure extracts the period $r$, and from there a classical procedure gives the index $n$. In order to recover $x$, it remains to either carry out an efficient classical procedure, or recover $x$ with a single query to a \gadlp{} oracle. Moreover, assuming the \gadlp{} oracle always succeeds, the success probability is precisely that of Algorithm 1 - that is, bounded below by a positive constant independently of $p$. 

\begin{remark}
    The factor $\log p$ in the complexity estimate is really coming from the `length' of a binary representation of $G_p$; that is, the number of bits required to represent $G_p$. In our case the size of $G_p$ happens to be polynomial in $p$, and therefore the relevant `length' is of order $\bigO{\log p}$. One might be used to seeing the complexity of similar period-finding routines, such as Shor's algorithm, presented as cubic in the length of a binary representation of the relevant parameters - see for example \cite[Section~3.3.3]{hirvensalo2003quantum}. In our case, the total complexity is quartic in the length of a binary representation, essentially because after the quantum part of the algorithm we still need to compute $\bigO{\log p}$ evaluations of the function $s$ in order to compute the index. In a sense, then, we can think of this extra $\log p$ factor as the extra cost incurred from the slightly more complicated scenario inherent to the problem.
\end{remark}

\section{Quantum Algorithms for \gadlp{}}\label{gadlp-algos}
Now that we have shown \sdlp{} can be efficiently solved with access to an appropriate \gadlp{} oracle it remains to examine the state of the art for \gadlp{}. It is here that the Abelian Hidden Shift Problem (Definition~\ref{hidden-shift}) comes in to play. Roughly speaking, we are given two injective functions $f,g$ from a group $A$ to a set $S$ that differ by a constant `shift' value, and our task is to recover the shift value. 

It is reasonably well-known (see \cite{stolbunov2010constructing, childs2014constructing}) that \gadlp{} reduces to \ahsp{}. In this section, we provide a context-specific proof of this fact, before discussing the best known algorithms for \ahsp{}.

\subsection{Group Actions to Hidden Shift}
The following result is found more or less verbatim in, for example, \cite{childs2014constructing}. We here give a context-specific reduction, for completeness.

\begin{theorem}\label{gadlp-to-ahsp}
    Let $\{G_p\}_p$ be an easy family of semigroups and fix $p$. For some pair $(g,\phi)\in G_p\times End(G_p)$ let $(\ZZr, \mathcal{C}, \oast)$ be the associated group action defined in Theorem~\ref{free-transitive-action}. One can efficiently solve \gadlp{} in $(\ZZr, \mathcal{C}, \oast)$ given access to an \ahsp{} oracle with respect to $\ZZr, \mathcal{C}$.
\end{theorem}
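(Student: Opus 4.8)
The plan is to run the textbook reduction from vectorisation to hidden shift, specialised to the group action $(\ZZr,\mathcal{C},\oast)$ of Theorem~\ref{free-transitive-action}. A \gadlp{} instance here consists of a base point $d\in\mathcal{C}$ and a target $y=\modd{z}\oast d\in\mathcal{C}$, and the task is to recover $\modd{z}\in\ZZr$. I would define two functions $f_0,f_1:\ZZr\to\mathcal{C}$ by $f_0(\modd{a})=\modd{a}\oast d$ and $f_1(\modd{a})=\modd{a}\oast y$, and claim that $f_0,f_1$ hide the shift $\modd{z}$ in the sense of Definition~\ref{hidden-shift}. An \ahsp{} oracle for $\ZZr,\mathcal{C}$ applied to $(f_0,f_1)$ then returns $\modd{z}$, which is exactly the desired \gadlp{} solution.

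Two things must be checked. First, injectivity of $f_0$ and $f_1$: this is precisely where the free and transitive property from Theorem~\ref{free-transitive-action} is used, since freeness gives that $\modd{a}\oast d=\modd{b}\oast d$ forces $\modd{a}=\modd{b}$ (transitivity even makes $f_0$ a bijection onto $\mathcal{C}$), and the same applies to $f_1$ because $y\in\mathcal{C}$; alternatively $f_1=f_0(\,\cdot\,+\modd{z})$ is $f_0$ precomposed with a bijection of $\ZZr$. Second, the hiding identity is a one-line consequence of the group action axioms: using axiom~(2) of Definition~\ref{defn-group-actions} and $y=\modd{z}\oast d$ one gets $f_1(\modd{a})=\modd{a}\oast(\modd{z}\oast d)=(\modd{a}+\modd{z})\oast d=f_0(\modd{a}+\modd{z})$ for all $\modd{a}\in\ZZr$, so $f_0,f_1$ hide $\modd{z}$.

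The remaining, and essentially only substantive, ingredient is efficiency: the \ahsp{} algorithms need to evaluate $f_0,f_1$ in superposition, so I would show each admits a reversible circuit of size $\mathrm{poly}(\log p)$. Given a representative of $\modd{a}$ one reduces modulo $r$ and then evaluates $\oast$; by the discussion following Definition~\ref{ast-function} and the bookkeeping in the proof of Lemma~\ref{exp-calc}, this costs $\mathrm{poly}(\log p)$ operations, because $r<N(p)$ is polynomial in $p$ and each application of $\ast$ reduces to a fixed number of endomorphism evaluations and semigroup multiplications, each of cost $\bigO{(\log p)^2}$ in an easy family. Such circuits can be made reversible with only polynomial overhead exactly as in Lemma~\ref{exp-calc}. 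Handing $f_0,f_1$ to the \ahsp{} oracle and returning its output then completes the reduction.

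The main obstacle, modest as it is, is purely one of bookkeeping: keeping the representative-versus-residue-class distinction straight when defining and evaluating $f_0,f_1$ on $\ZZr$, and confirming that the circuit-construction step genuinely costs only $\mathrm{poly}(\log p)$ — both of which can be lifted almost verbatim from the earlier sections. Injectivity and the hiding relation, by contrast, are immediate from Theorem~\ref{free-transitive-action} and the group action axioms.
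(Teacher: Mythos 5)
Your proposal is correct and follows essentially the same route as the paper: define $f_0(\modd{a})=\modd{a}\oast d$ and $f_1(\modd{a})=\modd{a}\oast y$, verify the hiding relation via the group action axioms, and obtain injectivity from freeness and transitivity (the paper uses the same construction with base points $s(g,\phi,n+i)$ and $s(g,\phi,n+j)$). Your additional discussion of reversible-circuit efficiency is sound but goes slightly beyond the paper, which deliberately defers the complexity accounting to the final assembled algorithm since it depends on first recovering $r$.
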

\begin{proof}
    Suppose we are given an instance of \gadlp{} in $(\ZZr, \mathcal{C}, \oast)$; that is, we are given a pair $(s(g,\phi,n+i),s(g,\phi,n+j))\in\mathcal{C}$ for some $i,j\in\{1,...,r\}$ and tasked with finding the unique $\modd{k}\in\ZZr$ such that $\modd{k}\oast s(g,\phi,n+i)=s(g,\phi,n+j)$. Our strategy is to construct injective functions $f_A,f_B:\ZZr\to\mathcal{C}$ that hide $\modd{k}$, and use the \ahsp{} oracle to recover this value. 

    Set $f_A,f_B:\ZZr\to\mathcal{C}$ as $f_A(\modd{x})=\modd{x}\ast s(g,\phi,n+i)$ and $f_B(\modd{x})=\modd{x}\ast s(g,\phi,n+j)$. Then
    \begin{align*}
        f_B(\modd{x})&=\modd{x}\ast s(g,\phi,n+j) \\
        &= \modd{x} \ast (\modd{k}\ast s(g,\phi,n+i)) \\
        &= (\modd{x}+\modd{k})\ast s(g,\phi,n+i) \\
        &= f_A(\modd{x}+\modd{k})
    \end{align*}
    In other words, $f_A,f_B$ hide $\modd{k}$. To complete the setup of an instance of \ahsp{} we require the functions to be injective, which follows from the action being free and transitive.
    \qed
\end{proof}

Note that we have in this case left out complexity estimates. This is because in order to give a full description of the functions $f_A,f_B$ we need to compute the group $\ZZr$, which can be done efficiently with knowledge of $r$. However, since we have already described a method of recovering $r$, we will discuss the complexity in the full \sdlp{} algorithm at the end of this section.

\subsection{Hidden Shift Algorithms}
We have finally arrived at the problem for which there are known quantum algorithms. The fastest known is of subexponential complexity, and is presented in \cite[Proposition~6.1]{kuperberg2005subexponential} as a special case of the Dihedral Hidden Subgroup Problem. 

\begin{theorem}[Kuperberg's Algorithm]
    There is a quantum algorithm that solves \ahsp{} with respect to $\ZZr,\mathcal{C}$ with time and query complexity $2^{\bigO{\sqrt{\log r}}}$.
\end{theorem}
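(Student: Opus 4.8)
The plan is to identify our \ahsp{} instance over $\ZZr$ with a special case of the dihedral Hidden Subgroup Problem and then quote Kuperberg's sieve essentially as a black box. First I would recall the standard reduction, which is precisely the form in which \ahsp{} is treated in \cite[Proposition~6.1]{kuperberg2005subexponential}: given injective functions $f,g\colon\ZZr\to\mathcal{C}$ hiding some $s\in\ZZr$ (so $g(a)=f(a+s)$ for all $a$, as in Definition~\ref{hidden-shift}), form the generalised dihedral group $D_r=\ZZr\rtimes\ZZ_2$, where $\ZZ_2$ acts on $\ZZr$ by negation, together with the function $F\colon D_r\to\mathcal{C}$ defined by $F(a,0)=f(a)$ and $F(a,1)=g(a)$. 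A direct check shows that $F$ is constant on the left cosets of the order-two subgroup $H=\{(0,0),(-s,1)\}$ (the exact sign being an artefact of the multiplication convention) --- constancy because $g(a-s)=f(a)$ by the hiding property --- and that it takes distinct values on distinct cosets, since $f$ and $g$ are injective. Hence $F$ hides $H$, a generator of $H$ determines $s$, and solving \ahsp{} reduces to finding $H$.

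Second, I would invoke the complexity guarantee of \cite{kuperberg2005subexponential}: Kuperberg's algorithm solves the Hidden Subgroup Problem over the dihedral group $D_N$ --- equivalently, \ahsp{} over an abelian group of order $N$ --- with time, query and space complexity $2^{\bigO{\sqrt{\log N}}}$. Setting $N=|\ZZr|=r$ gives exactly the claimed bound $2^{\bigO{\sqrt{\log r}}}$. The reduction above is query-preserving up to a constant factor, since each evaluation of $F$ uses a single evaluation of $f$ or $g$, and it contributes only overhead polynomial in $\log r$ for the arithmetic of $D_r$; neither affects the asymptotic estimate.

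The only points requiring care are routine: that elements of $\ZZr$, and hence of $D_r$, admit an $\bigO{\log r}$-bit encoding with an efficiently computable group law, and that the injectivity of $f,g$ promised by \ahsp{} --- which in our application (Theorem~\ref{gadlp-to-ahsp}) comes from the underlying group action being free and transitive (Theorem~\ref{free-transitive-action}) --- genuinely yields the coset-separation property demanded of the HSP oracle. I do not expect a real obstacle here: the statement is a direct specialisation of a published result, and the substantive work was already carried out in building the group action and reducing \gadlp{} to \ahsp{}.
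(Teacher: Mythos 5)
Your proposal is correct and follows essentially the same route as the paper, which states this theorem without proof beyond citing \cite[Proposition~6.1]{kuperberg2005subexponential} as a special case of the dihedral Hidden Subgroup Problem. You simply make explicit the standard hidden-shift-to-dihedral-HSP reduction that the citation leaves implicit; the substance and the complexity accounting match.
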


Kuperberg's algorithm also requires quantum space $2^{\bigO{\log r}}$. For a slower but less space-expensive algorithm, we can also use a generalised version of an algorithm due to Regev \cite{regev2004subexponential}. The generalised version appears in \cite[Theorem~5.2]{childs2014constructing}. 

\begin{theorem}[Regev's Algorithm]
      There is a quantum algorithm that solves \ahsp{} with respect to $\ZZr,\mathcal{C}$ with time and query complexity 
      \[e^{\sqrt{2}+o(1)\sqrt{\ln r\ln\ln r}}\]
      and space complexity $\bigO{poly(\log r)}$.
\end{theorem}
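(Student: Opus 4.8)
The plan is to derive this statement as a direct instantiation of the known subexponential hidden-shift algorithm; accordingly, the substantive content lies in matching our \ahsp{} instance to the hypotheses of the cited result and then substituting the group order into its complexity estimate, rather than in any new algorithmic idea. By Definition~\ref{hidden-shift}, an \ahsp{} instance over $\ZZr$ with target $\mathcal{C}$ supplies two injective functions $f,g:\ZZr\to\mathcal{C}$, accessed as oracles, that hide a shift $\modd{s}\in\ZZr$; since injectivity and the abelian group structure are already built into the definition, the only parameter we must extract is the order $|\ZZr|=r$.

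The first step is to observe that $\ZZr$ is cyclic, so hidden shift over $\ZZr$ is exactly the form of the Dihedral Hidden Subgroup Problem that Regev's original algorithm \cite{regev2004subexponential} targets: a hidden shift over $\ZZ_N$ is equivalent to locating the hidden reflection subgroup of the dihedral group of order $2N$, the two being polynomially interconvertible by folding $f,g$ into a single oracle. Alternatively, one may invoke the abelian generalisation \cite[Theorem~5.2]{childs2014constructing}, valid for any finite abelian group and hence for the special case $\ZZr$; either route fixes $N=r$.

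With the instance in standard form, the second step is simply to read off the bounds. The algorithm prepares coset (phase) states from queries to $f,g$ and recovers the labels determining $\modd{s}$ by Regev's classical sieving of measured character values; the cited analysis gives time and query complexity $e^{(\sqrt{2}+o(1))\sqrt{\ln r\ln\ln r}}$. The decisive feature for the space claim is that this combination is performed classically on measured data rather than by maintaining a large entangled register, so that at most $\bigO{poly(\log r)}$ qubits are live at any moment, yielding the stated space complexity.

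I do not expect a genuine obstacle here, since the result is a citation: the only places warranting care are the parameter identification $N=r$ and the justification of the polynomial space bound, the latter being precisely the property distinguishing Regev's approach from Kuperberg's quantum sieve used in the preceding theorem. Once these are noted, the complexity and space estimates transfer verbatim from the cited theorem.
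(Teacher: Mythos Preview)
Your proposal is correct and matches the paper's approach: the paper does not prove this theorem at all but simply cites \cite[Theorem~5.2]{childs2014constructing} as providing the generalised version of Regev's algorithm, so the only content is the instantiation $N=r$ that you spell out. Your expansion of why the hypotheses are met and why the space bound holds is a reasonable elaboration, but the paper treats the result as a black-box citation.
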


We note that both Kuperberg's and Regev's algorithms succeed with constant probability.

\subsection{Solving \sdlp{}}
We finish the section by stitching all the components together into an algorithm that solves \sdlp{}. For brevity of exposition we include only complexity estimates for using Kuperberg's algorithm - but finding the bounds in the case of Regev's algorithm is very similar. 
\begin{theorem}\label{complete-algo}
    Let $\{G_p\}_p$ be an easy family of semigroups, and fix $p$. For any pair $(g,\phi)\in G_p\times End(G_p)$, there is a quantum algorithm solving \sdlp{} with respect to $(g,\phi)$ with time and query complexity $2^{\bigO{\sqrt{\log p}}}$. 
\end{theorem}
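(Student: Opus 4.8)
The plan is to assemble the reductions and algorithms built in the preceding sections into a single pipeline, tracking complexities carefully so as to verify that every polynomial overhead is absorbed into the subexponential bound. Given an \sdlp{} instance $((g,\phi),s(g,\phi,x))$, the algorithm first runs Algorithm~\ref{sd-ga}: by Theorem~\ref{sdlp-to-gadlp} this recovers the period $r$ and index $n$ in time $\bigO{(\log p)^4}$ with probability $\bigOmega{1}$, and then either returns $x$ by a purely classical procedure (when $s(g,\phi,x)$ lies in the tail) or reduces the remaining work to a single \gadlp{} query for the group action $(\ZZ_r,\mathcal{C},\oast)$. In the latter case we instantiate the \gadlp{}-to-\ahsp{} reduction of Theorem~\ref{gadlp-to-ahsp}: with $r$ now known the group $\ZZ_r$ is computed directly, and the hiding functions $f_A,f_B\colon\ZZ_r\to\mathcal{C}$ are realised as efficiently computable maps (hence implementable as quantum oracles), since evaluating $f_A(\modd{x})=\modd{x}\oast s(g,\phi,n+i)$ reduces to computing $\phi^{x\bmod r}$ and $s(g,\phi,x\bmod r)$, which by square-and-multiply costs $\bigO{\log r}$ holomorph operations at $\bigO{(\log p)^2}$ each, i.e.\ $\bigO{(\log p)^3}$ per oracle call (using $r<N(p)$, polynomial in $p$).

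Finally we invoke Kuperberg's algorithm (or Regev's) on $f_A,f_B$, recovering the hidden shift $\modd{k}$ and hence the \gadlp{} solution, using $2^{\bigO{\sqrt{\log r}}}$ time and queries and succeeding with constant probability. Because $r\leq N(p)$ and $N(p)$ is polynomial in $p$ by the easy-family hypothesis, $\log r=\bigO{\log p}$, so this step runs in $2^{\bigO{\sqrt{\log p}}}$; multiplying the query count by the $\bigO{(\log p)^3}$ per-query cost of evaluating $f_A,f_B$ leaves the bound $2^{\bigO{\sqrt{\log p}}}$ unchanged, and the $\bigO{(\log p)^4}$ cost of the first stage is dominated. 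The overall success probability is the product of the constant success probabilities of Algorithm~\ref{period-recovery} and of Kuperberg's algorithm, hence still a positive constant independent of $p$ (amplifiable by a constant number of repetitions if a higher success probability is wanted).

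The individual steps are routine once the earlier results are granted; the only real content is the bookkeeping in the second stage. The point requiring care is that the phrase ``efficiently solve \gadlp{} given an \ahsp{} oracle'' in Theorem~\ref{gadlp-to-ahsp} suppresses the cost of actually \emph{implementing} that oracle, namely of evaluating $f_A$ and $f_B$ — which is precisely why the complexity estimate was deferred there. Confirming that this cost is only $\mathrm{poly}(\log p)$ per call, and therefore negligible against the $2^{\bigO{\sqrt{\log r}}}$ query count, is the crux; beyond this accounting no genuinely new obstacle arises.
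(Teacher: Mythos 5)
Your proposal is correct and follows essentially the same route as the paper: run the period/index recovery and the \sdlp{}-to-\gadlp{} reduction, instantiate the hidden-shift functions, and finish with Kuperberg's (or Regev's) algorithm, using $r\leq N(p)=\mathrm{poly}(p)$ to convert $2^{\bigO{\sqrt{\log r}}}$ into $2^{\bigO{\sqrt{\log p}}}$. If anything, your accounting is slightly more careful than the paper's, since you explicitly bound the $\bigO{(\log p)^3}$ cost of implementing each oracle call to $f_A,f_B$ — a cost the paper defers after Theorem~\ref{gadlp-to-ahsp} and then leaves implicit in its final complexity tally.
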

\begin{proof}
Let $(g,\phi)\in G_p\times End(G_p)$ and suppose we are given the value $s(g,\phi,x)$ for some $x$ sampled uniformly from the set $\{1,...,N\}$, where $N$ is the size of $\mathcal{X}_{g,\phi}$. The following steps recover $x$:
\begin{enumerate}
    \item Run Algorithms~\ref{period-recovery} and~\ref{index-recovery} on the pair $(g,\phi)$. By Theorem~\ref{index-period-recovery}, with positive probability we recover the index and period of $(g,\phi)$, the pair $(n,r)$, in time $\bigO{(\log p)^4}$.
    \item By Theorem~\ref{sdlp-to-gadlp}, either we are done efficiently, or it remains to solve an instance of \gadlp{} with respect to the group action $(\ZZr, \mathcal{C}, \oast)$.
    \item By Theorem~\ref{gadlp-to-ahsp}, once we have computed the group action $(\ZZr,\mathcal{C},\oast)$ it remains to solve an instance of \ahsp{} with respect to $\ZZr,\mathcal{C}$. This can be done with access to the index and period $n,r$.
    \item Solve \ahsp{} using Kuperberg's algorithm or Regev's algorithm.
\end{enumerate}
In summary, the total quantum complexity of solving an \sdlp{} instance for any pair in $G_p\times End(G_p)$ is either $\bigO{(\log p)^4}$, or if a call to the \gadlp{} oracle is required, $2^{\bigO{\sqrt{\log r}}}=2^{\bigO{\sqrt{\log p}}}$ since $G_p$ is from an easy family of semigroups. Depending on constants, we expect this latter term to dominate the complexity. Moreover, we note that since both our algorithm to extract the period and Kuperberg's algorithm succeed with constant probability, we expect our algorithm to succeed with constant probability also. \qed
\end{proof}
\section{Conclusion}
We have provided the first dedicated analysis of \sdlp{}, showing a reduction to a well-studied problem. Perhaps the most surprising aspect of the work is the progress made by a simple rephrasing; we made quite significant progress through rather elementary methods, and we suspect much more can be made within this framework. 

The reader may notice that we have shown that \sdpke{} shares a very similar structure to that of a commutative action-based key exchange; it is known that breaking all such protocols can be reduced to the Abelian Hidden Shift Problem. Indeed, this work shows the algebraic machinery of \sdpke{} is a step twoards a candidate for what Couveignes calls a \textit{hard homogenous space}\footnote{Another major example of which arises from the theory of isogenies between elliptic curves - see, for example, \cite{castryck2018csidh}} \cite{couveignes2006hard}, which was not known until now. In line with the naming conventions in this area we propose a renaming of \sdpke{} to \spdh{}, which stands for `Semidirect Product Diffie--Hellman', and should be pronounced \textit{spud}. 

We would also like to stress the following sentiment. The purpose of this paper is not to claim a general purpose break of \sdpke{} (or, indeed, \spdh{}) - the algorithm presented is subexponential in complexity, which has been treated as tolerable in classical contexts. Instead, the point is to show a connection between \sdlp{} and a known hardness problem, thereby providing insight on a problem about which little was known.

 \subsection*{Acknowledgements}
 We thank Chloe Martindale of the University of Bristol and Samuel Jaques of the University of Oxford for fruitful discussion on this topic at the recent ICMS conference \textit{Foundations and Applications of Lattice-based Cryptography}. We furthermore thank the organisers of this conference for facilitating this conversation. DK thanks Institut des Hautes \'Etudes Scientifiques - IHES for providing stimulating environment while this project was partially done. Authors CB, DK, and LP conducted this work partially with the support of ONR Grant 62909-24-1-2002.

\printbibliography

\end{document}